\documentclass[11pt]{article}

\usepackage{fullpage, amsmath, amsthm, amssymb, microtype, tikz, braket}
\usetikzlibrary{patterns}

\newtheorem{prop}{Proposition}
\newtheorem{thm}{Theorem}
\newtheorem{lem}{Lemma}

\newtheorem{cor}{Corollary}

\theoremstyle{definition}
\newtheorem{defn}{Definition}

\newcommand{\R}{\mathbb{R}}

\newcommand{\N}{\mathbb{N}}

\newcommand{\C}{\mathbb{C}}

\renewcommand{\epsilon}{\varepsilon}

\renewcommand{\hat}{\widehat}

\DeclareMathOperator*{\E}{E}

\DeclareMathOperator{\Tr}{Tr}

\title{Universal Bell Correlations Do Not Exist}
\author{Cole A. Graham\thanks{Department of Mathematics, Stanford University, \texttt{grahamca@stanford.edu}} \and William M. Hoza\thanks{Department of Computer Science, University of Texas at Austin, \texttt{whoza@utexas.edu}}}

\begin{document}
	\maketitle
	
	\begin{abstract}
		We prove that there is no finite-alphabet nonlocal box that generates exactly those correlations that can be generated using a maximally entangled pair of qubits. More generally, we prove that if some finite-alphabet nonlocal box is strong enough to simulate arbitrary local projective measurements of a maximally entangled pair of qubits, then that nonlocal box cannot itself be simulated using \emph{any} finite amount of entanglement. We also give a quantitative version of this theorem for approximate simulations, along with a corresponding upper bound.
	\end{abstract}

	\section{Introduction}
	A correlation box is a conceptual tool for reasoning about nonlocality:
	\begin{defn}
		A (discrete, bipartite) \emph{correlation box} is a map 
		\[
			\mathsf{Cor}: X \times Y \to \{\mathcal{D} : \mathcal{D} \text{ is a probability distribution over }A \times B\},
		\]
		where $X, Y, A, B$ are countable (finite or countably infinite) alphabets. We will abuse notation and write $\mathsf{Cor}: X \times Y \to A \times B$.
	\end{defn}
	Think of a correlation box $\mathsf{Cor}$ as a kind of ``channel'' through which two separated parties, Alice and Bob, can interact. Alice chooses $x \in X$ and Bob chooses $y \in Y$. A sample $(a, b)$ is drawn from $\mathsf{Cor}(x, y)$, and Alice is given $a$ and Bob is given $b$. The canonical example is the \emph{Popescu-Rohrlich box} $\mathsf{PR}: \{0, 1\} \times \{0, 1\} \to \{0, 1\} \times \{0, 1\}$, which is defined \cite{kt85, pr94} by
	\[
	\mathsf{PR}(x, y) \stackrel{\text{def}}{=} \begin{cases} (0, xy) & \text{with probability $1/2$} \\
	(1, 1 - xy) & \text{with probability $1/2$.} \end{cases}
	\]
	Observe that PR boxes cannot be used to communicate, since the marginal distributions of $a$ and $b$ are uniform regardless of $x$ and $y$. But $\mathsf{PR}$ is a \emph{nonlocal} box, i.e. given access to a PR box, Alice and Bob can perform tasks that would be impossible if they were isolated (even if they had shared randomness). The standard example is winning the CHSH game \cite{chsh69} with certainty.
	
	Qualitatively speaking, \emph{quantum entanglement is like a PR box}: it can be used to generate nonlocal correlations, but it cannot be used to communicate. Unfortunately, entanglement is not \emph{quantitatively} equivalent to a PR box; the Tsierelson bound \cite{cir80} implies that there is no quantum strategy for the CHSH game that wins with probability more than about $85\%$. In this work, we show that there is \emph{no} finite-alphabet correlation box that has exactly the same power as quantum entanglement.
	
	\subsection{Distributed sampling complexity classes}
	
	We can think of a correlation box as a \emph{distributed sampling problem} \cite{dlr05}: the problem of simulating the box. That is, Alice is given $x \in X$ and Bob is given $y \in Y$. Alice is supposed to output $a \in A$ and Bob is supposed to output $b \in B$ such that $(a, b) \sim \mathsf{Cor}(x, y)$.
	
	We define $\mathbf{SR}$ to be the class of all correlation boxes that can be simulated if Alice and Bob have unlimited shared randomness (but are otherwise isolated). We define $\mathbf{Q}$ to be the class of all correlation boxes that can be simulated if Alice and Bob have unlimited shared randomness and an arbitrary but finite amount of entanglement. Clearly, $\mathbf{SR} \subseteq \mathbf{Q}$. Bell's theorem \cite{bel64} can be interpreted as stating that $\mathbf{SR} \neq \mathbf{Q}$.
	
	For an upper bound on $\mathbf{Q}$, say that a correlation box $\mathsf{Cor}$ is \emph{non-signaling} if the marginal distribution of $a$ depends only on $x$ and the marginal distribution of $b$ depends only on $y$, where $(a, b) \sim \mathsf{Cor}(x, y)$. Let $\mathbf{NS}$ be the class of all non-signaling correlation boxes. In this notation, the \emph{no-communication theorem} states that $\mathbf{Q} \subseteq \mathbf{NS}$. The PR box shows that $\mathbf{Q} \neq \mathbf{NS}$. So to summarize, we have the proper inclusions $\mathbf{SR} \subsetneqq \mathbf{Q} \subsetneqq \mathbf{NS}$.
	
	\subsection{Our results}
	
	We define $\mathbf{BELL}$ to be the class of all correlation boxes that can be simulated if Alice and Bob have unlimited shared randomness, each holds one of a pair of maximally entangled qubits, and they are only allowed to make projective measurements. Understanding $\mathbf{BELL}$ is a good first step toward understanding $\mathbf{Q}$.
	
	Many previous results about simulating Bell correlations can be understood as \emph{reductions} between correlation boxes. A \emph{$k$-query reduction} from $\mathsf{Cor}_1$ to $\mathsf{Cor}_2$ is a protocol for simulating $\mathsf{Cor}_1$ in which Alice and Bob have unlimited shared randomness and $k$ copies of $\mathsf{Cor}_2$. (Taking a cue from quantum mechanics, we think of each correlation box as ``single use only''.) We will simply say that $\mathsf{Cor}_1$ \emph{reduces to} $\mathsf{Cor}_2$ if there is a $k$-query reduction from $\mathsf{Cor}_1$ to $\mathsf{Cor}_2$ for some $k$. (See Section~\ref{sec:reduction-model} for details.) We say that $\mathsf{Cor}: X \times Y \to A \times B$ is \emph{binary} if $X = Y = A = B = \{0, 1\}$. Our main result:
	
	\begin{thm} \label{thm:main}
		Suppose $\mathsf{Cor} \in \mathbf{Q}$ has countable input alphabets and finite output alphabets. Then there is some binary correlation box in $\mathbf{BELL}$ that does not reduce to $\mathsf{Cor}$.
	\end{thm}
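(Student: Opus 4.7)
The plan exploits the combinatorial structure imposed by the finite output alphabets of $\mathsf{Cor}$ to constrain the set $R$ of binary boxes reducible to $\mathsf{Cor}$, and then to exhibit a specific binary Bell box escaping $R$. For the structural step, I would argue that since $\mathsf{Cor}$ has finite output alphabets, a deterministic $k$-query reduction with binary target is specified by a finite decision tree (depth $k$, countable branching from the choice of $\mathsf{Cor}$-inputs at each node, binary leaf outputs), so deterministic protocols form a countable family as $k$ varies, producing countably many ``pure-strategy'' output boxes $\{B_i\}_{i\in\N}\subseteq\mathbf{Q}$. Shared randomness amounts to mixing, so $R$ equals the closed convex hull of $\{B_i\}$ in the finite-dimensional affine space of binary no-signaling correlations. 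Meanwhile, binary Bell boxes form a smooth continuous family $\{B_{\vec\alpha,\vec\beta}\}$ parameterized by four projective-measurement angle pairs on the maximally entangled pair, and the task reduces to showing this family is not contained in $R$.

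The main obstacle is that a closed convex hull of countably many points in finite-dimensional space can contain smooth continuous curves, so a pure cardinality argument fails. My plan would be to establish a quantitative lower bound first: for each $k$, no $k$-query reduction to $\mathsf{Cor}$ can exactly simulate more than a ``thin'' set of Bell boxes, with a quantitative gap $\epsilon_k>0$ (depending on $\mathsf{Cor}$'s alphabets and $k$) between every $k$-query-reducible box and some specific Bell box. The abstract's advertisement of a quantitative version of the theorem with a matching upper bound strongly suggests this is the correct route: exact nonreducibility should then follow from the approximate statement by a diagonalization or compactness argument over $k$, using that the Bell family is a compact manifold in the no-signaling polytope and that approximation error is lower-semicontinuous in the target box.

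An alternative and possibly cleaner route is a rigidity argument: a binary Bell box on the Tsirelson boundary, by CHSH self-testing, can be exactly simulated only by a quantum protocol whose entanglement has essentially maximally entangled qubit structure; one would show that no reduction to a finite-alphabet $\mathsf{Cor}\in\mathbf{Q}$ can realize this rigid structure for every Tsirelson-saturating angle choice simultaneously, because the reduction's structure is too discrete relative to the continuum of angles involved. Whichever route one takes, the core difficulty is ruling out that the countable pure-strategy family $\{B_i\}$ can conspire via convex mixing to reproduce the full continuous Bell manifold exactly—effectively separating a countably generated convex set from a smooth submanifold of the quantum body.
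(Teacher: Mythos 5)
You correctly identify the countable family of deterministic protocols arising from the finite output alphabets, and you correctly flag the danger: a countably generated convex set in a finite-dimensional polytope can contain a continuum of points, so raw cardinality plus convexity does not finish the job. But you then abandon the direct route and propose a quantitative-first or rigidity-style argument, when in fact the paper resolves the convexity obstacle cleanly with one extra observation that your plan lacks.

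The missing idea is to use the hypothesis $\mathsf{Cor}\in\mathbf{Q}$ not merely as a cardinality constraint but as an \emph{upper bound on every deterministic sub-protocol}. Restrict attention to the one-parameter family of Bell boxes $\mathsf{S}_{p,1/2}$ that play the biased CHSH game $\mathrm{CHSH}[p,1/2]$ optimally, winning with probability $\omega(p)=\tfrac12+\tfrac12\sqrt{p^2+(1-p)^2}$ (Lawson--Linden--Popescu). Since $\mathsf{Cor}\in\mathbf{Q}$ and $\mathbf{Q}$ is closed under reductions, every deterministic $k$-query $\mathsf{Cor}$-protocol implements a box in $\mathbf{Q}$ and hence wins $\mathrm{CHSH}[p,1/2]$ with probability at most $\omega(p)$, by the optimality half of the LLP result. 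Thus if a randomized reduction achieves $\omega(p)$ on average, the best deterministic protocol in its support is both at least the average and at most $\omega(p)$, hence \emph{exactly} $\omega(p)$. This is what eliminates the convex-hull concern: you never need to argue about the convex body itself, only about whether some pure strategy hits $\omega(p)$ on the nose. Each deterministic protocol yields an affine function $\ell_\Pi(p)$ for its winning probability, $\omega$ is strictly convex on $[1/2,1]$ so each affine function matches $\omega$ at most twice, and there are only countably many such $\ell_\Pi$ (by your own counting). Therefore the set of $p$ for which some reduction could be exact is countable, and choosing $p$ outside it gives the non-reducible binary Bell box.

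Your second suggested route (rigidity/self-testing) is in the right spirit---it amounts to the observation that boundary points of the quantum set must be achieved by extremal strategies---but the paper achieves this effect far more cheaply via the scalar Tsirelson-type inequality for the biased game, without invoking any state-level rigidity. Your first route (prove the quantitative bound, then pass to $\epsilon\to 0$ by compactness) would work but adds unnecessary machinery; the paper proves the qualitative theorem directly with the argument above and treats the quantitative version as a separate refinement. As written, your proposal has a genuine gap: it does not contain the device that bridges ``countably many pure strategies'' to ``only countably many bad $p$'', and without that device neither of your two sketched alternatives is carried far enough to close the argument.
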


	As usual, we say that $\mathsf{Cor}$ is \emph{$\mathbf{C}$-hard} if every correlation box in $\mathbf{C}$ reduces to $\mathsf{Cor}$. We say that $\mathsf{Cor}$ is \emph{$\mathbf{C}$-complete} if $\mathsf{Cor}$ is $\mathbf{C}$-hard and $\mathsf{Cor} \in \mathbf{C}$.

	\begin{cor}
		There does not exist a finite-alphabet $\mathbf{BELL}$-complete correlation box.
	\end{cor}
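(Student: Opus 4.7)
The plan is to derive the corollary as an essentially immediate consequence of Theorem~\ref{thm:main}, by contraposition. Suppose, for contradiction, that $\mathsf{Cor}$ is a finite-alphabet $\mathbf{BELL}$-complete correlation box. First I would verify that the hypotheses of Theorem~\ref{thm:main} are met. Since $\mathsf{Cor}$ is $\mathbf{BELL}$-complete, by definition $\mathsf{Cor} \in \mathbf{BELL}$. The class $\mathbf{BELL}$ is contained in $\mathbf{Q}$, because any protocol using shared randomness plus a single maximally entangled pair of qubits with projective measurements is in particular a protocol using shared randomness plus a finite amount of entanglement; so $\mathsf{Cor} \in \mathbf{Q}$. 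Moreover, $\mathsf{Cor}$ is assumed to have finite alphabets, which are in particular countable on the input side and finite on the output side.

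Hence Theorem~\ref{thm:main} applies to $\mathsf{Cor}$ and yields a binary correlation box $\mathsf{Cor}' \in \mathbf{BELL}$ that does \emph{not} reduce to $\mathsf{Cor}$. On the other hand, $\mathbf{BELL}$-hardness of $\mathsf{Cor}$ says that every element of $\mathbf{BELL}$, in particular $\mathsf{Cor}'$, \emph{does} reduce to $\mathsf{Cor}$. This is a contradiction, so no finite-alphabet $\mathbf{BELL}$-complete correlation box exists.

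There is essentially no obstacle here; the entire content of the corollary is packaged into Theorem~\ref{thm:main}. The only tiny point worth noting explicitly in the write-up is the inclusion $\mathbf{BELL} \subseteq \mathbf{Q}$, which licenses applying the theorem to the hypothetical complete box. No further calculation or construction is needed beyond unwinding the definitions of $\mathbf{BELL}$-hard and $\mathbf{BELL}$-complete.
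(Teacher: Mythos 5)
Your proof is correct and is exactly the intended argument; the paper states this as an immediate corollary of Theorem~\ref{thm:main} without further proof, relying on the same observations you make ($\mathbf{BELL} \subseteq \mathbf{Q}$ and unwinding the definition of $\mathbf{BELL}$-complete).
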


	\begin{cor}
		There does not exist a finite-alphabet $\mathbf{Q}$-complete correlation box.
	\end{cor}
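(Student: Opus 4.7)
The plan is to deduce this corollary directly from Theorem~\ref{thm:main}, using the containment $\mathbf{BELL} \subseteq \mathbf{Q}$ as the bridge. The argument is a one-line contradiction: suppose for contradiction that $\mathsf{Cor}$ is a finite-alphabet $\mathbf{Q}$-complete correlation box. Then by definition $\mathsf{Cor} \in \mathbf{Q}$, and $\mathsf{Cor}$ has finite (hence countable) input alphabets and finite output alphabets, so the hypotheses of Theorem~\ref{thm:main} are met. The theorem then furnishes a binary correlation box $\mathsf{Cor}' \in \mathbf{BELL}$ that does not reduce to $\mathsf{Cor}$.

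Next I would observe that $\mathbf{BELL} \subseteq \mathbf{Q}$, which is essentially by definition: projective measurements on a maximally entangled pair of qubits constitute a particular finite-dimensional entangled strategy, so any box simulable in $\mathbf{BELL}$ is simulable in $\mathbf{Q}$. Consequently $\mathsf{Cor}' \in \mathbf{Q}$. But $\mathbf{Q}$-completeness of $\mathsf{Cor}$ means that every box in $\mathbf{Q}$ reduces to $\mathsf{Cor}$, and in particular $\mathsf{Cor}'$ reduces to $\mathsf{Cor}$, contradicting the conclusion of Theorem~\ref{thm:main}.

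There is essentially no obstacle to this argument; the entire content has been pushed into Theorem~\ref{thm:main}. The only point worth double-checking is that the notion of ``finite-alphabet'' used in the corollary matches what Theorem~\ref{thm:main} requires (finite inputs satisfy ``countable inputs,'' and finite outputs are finite outputs), and that $\mathbf{BELL} \subseteq \mathbf{Q}$ really does hold under the definitions in Section~1.2, which it does since the $\mathbf{BELL}$ model is a special case of the $\mathbf{Q}$ model with a specific finite-dimensional entangled state and a restricted class of measurements.
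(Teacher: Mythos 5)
Your argument is correct and is exactly the intended deduction: the paper states this corollary immediately after Theorem~\ref{thm:main} without further proof, relying on the same observations you make (finite alphabets satisfy the hypotheses of the theorem, and $\mathbf{BELL} \subseteq \mathbf{Q}$ makes the $\mathbf{BELL}$ box furnished by the theorem a counterexample to $\mathbf{Q}$-hardness).
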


	\begin{figure}
		\begin{center}
			\begin{tikzpicture}
				\draw (-4, 0) -- (4, 0);

				\begin{scope}
					\clip (-4.5, 0) rectangle (4.5, 8);
					\draw (0, 0) ellipse (1 and 1);
					\node at (0, 0.5) {$\mathbf{SR}$};
					
					\draw (0, 0) ellipse (2 and 3);
					\node at (1, 1.4) {$\mathbf{BELL}$};
					\draw (0, 0) ellipse (3 and 4.5);
					\node at (2, 2.2) {$\mathbf{Q}$};
					\draw (0, 0) ellipse (4 and 6);
					\node at (3, 3) {$\mathbf{NS}$};
					
					\draw (0, 10.2) ellipse (4 and 8);
					\node at (0, 7) {$\mathbf{BELL}$-hard};
					\fill (0, 5.2) circle (0.08) node[right] {$\mathsf{PR}$};
				\end{scope}
				
				\begin{scope}
					\clip (-4.5, 0) rectangle (4.5, 8);
					\clip (0, 10.2) ellipse (4 and 8);
					\fill[pattern=dots] (0, 0) ellipse (3 and 4.5);
				\end{scope}
			\end{tikzpicture}
		\end{center}
	
		\caption{Our result implies that the shaded region does not contain any finite-alphabet correlation boxes.}
	\end{figure}
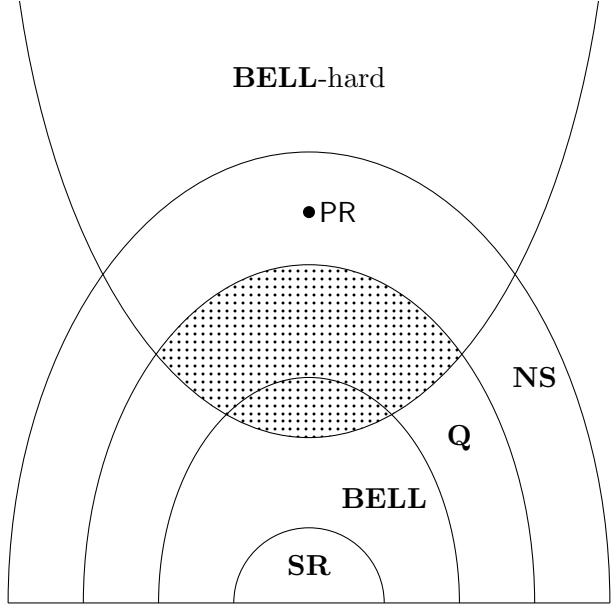

	Our result can be thought of as ``bad news'' for the project of understanding $\mathbf{BELL}$. We also give a quantitative version of our result for approximate simulations. An \emph{$\epsilon$-error reduction} is defined like an ordinary reduction except that we allow $\epsilon$ total variation error.
	
	\begin{thm} \label{thm:lower-bound}
		Suppose $\mathsf{Cor}_2: X \times Y \to A \times B$ is a finite-alphabet correlation box in $\mathbf{Q}$. Then there exists a binary correlation box $\mathsf{Cor}_1 \in \mathbf{BELL}$ such that for every $k$, if there is a $k$-query $\epsilon$-error reduction from $\mathsf{Cor}_1$ to $\mathsf{Cor}_2$, then
		\[
			k^4 \cdot (2|X|)^{4|A|^k} \cdot (2|Y|)^{4|B|^k} \geq \Omega(1/\epsilon).
		\]
	\end{thm}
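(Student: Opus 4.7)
The plan is a counting-and-diagonalization argument along a one-parameter family in $\mathbf{BELL}$ lying on the Tsirelson boundary of $\mathbf{Q}$. First I count deterministic $k$-query strategies. Alice's deterministic strategy on input $x'\in\{0,1\}$ is an $|A|$-ary depth-$k$ decision tree whose internal nodes bear labels in $X$ and whose leaves bear labels in $\{0,1\}$; since there are at most $|A|^k$ leaves (and no more internal nodes), the count per input is at most $(2|X|)^{|A|^k}$, giving $N_A \leq (2|X|)^{2|A|^k}$ over the two inputs and analogously $N_B \leq (2|Y|)^{2|B|^k}$. The set $\mathbf{L}_k$ of binary boxes achievable by randomized $k$-query reductions to $\mathsf{Cor}_2$ is the convex hull, taken via shared randomness, of the $N_k := N_A N_B$ pure-strategy behaviors; since $\mathsf{Cor}_2 \in \mathbf{Q}$, we have $\mathbf{L}_k \subseteq \mathbf{Q}$, and $\mathbf{L}_k$ is a polytope with at most $N_k$ vertices.

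\textbf{A family on the Tsirelson boundary.} I will parameterize $\mathsf{Cor}_\phi \in \mathbf{BELL}$, $\phi \in [0,2\pi)$, by letting Alice measure $\ket{\Phi^+}$ at angles $(\alpha_0,\alpha_1)=(0,\pi/2)$ and Bob at $(\beta_0,\beta_1)=(-\pi/4+\phi,\pi/4+\phi)$, so $E_{xy}(\phi) = \cos(\alpha_x-\beta_y)$. With the two CHSH variants
\[
\mathcal{B}_1 := E_{00}+E_{01}-E_{10}+E_{11}, \qquad \mathcal{B}_2 := E_{00}-E_{01}+E_{10}+E_{11},
\]
a direct computation yields $(\mathcal{B}_1,\mathcal{B}_2)(\mathsf{Cor}_\phi) = 2\sqrt{2}(\cos\phi,\sin\phi)$. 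Hence the tilted CHSH functional $\mathcal{B}_\phi := \cos\phi\,\mathcal{B}_1 + \sin\phi\,\mathcal{B}_2$ has coefficient matrix $C^{(\phi)}$ with $(C^{(\phi)})^{T}C^{(\phi)}=2I$, its Tsirelson bound on $\mathbf{Q}$ is $2\sqrt{2}$, and this value is achieved exactly at $\mathsf{Cor}_\phi$---so the whole curve lies on the Tsirelson boundary of $\mathbf{Q}$.

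\textbf{Key lemma and diagonalization.} The crux of the proof is the estimate
\[
\lambda\bigl(\{\phi \in [0,2\pi) : \mathrm{TV}(\mathsf{Cor}_\phi, \mathbf{L}) \leq \epsilon\}\bigr) \leq C\,N\sqrt{\epsilon}
\]
for any convex polytope $\mathbf{L}\subseteq\mathbf{Q}$ with $N$ vertices. If $\mathsf{Cor}_\phi$ is within TV distance $\epsilon$ of $y\in \mathbf{L}$, then Lipschitz continuity of $\mathcal{B}_\phi$ in TV gives $\mathcal{B}_\phi(y)\geq 2\sqrt{2}-O(\epsilon)$; writing $y$ as a convex combination of vertices forces some vertex $v_i$ of $\mathbf{L}$ to satisfy $\mathcal{B}_\phi(v_i)\geq 2\sqrt{2}-O(\epsilon)$. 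For fixed $v_i\in\mathbf{Q}$, the function $\phi\mapsto \mathcal{B}_\phi(v_i) = R_i\cos(\phi-\phi_i^\ast)$ has amplitude $R_i = \sqrt{\mathcal{B}_1(v_i)^2+\mathcal{B}_2(v_i)^2}\leq 2\sqrt{2}$ (by Tsirelson applied to $\mathcal{B}_\phi$ at its maximizing angle); hence $\{\phi : R_i\cos(\phi-\phi_i^\ast)\geq 2\sqrt{2}-O(\epsilon)\}$ is an arc of length $O(\sqrt{\epsilon})$, and a union bound over the $N$ vertices yields the lemma. Now set $\epsilon_k := c/(k^4 N_k^2)$ for a small absolute constant $c>0$: the bad set $B_k := \{\phi : \mathrm{TV}(\mathsf{Cor}_\phi,\mathbf{L}_k)\leq\epsilon_k\}$ satisfies $\lambda(B_k)\leq C\sqrt{c}/k^2$, and $\sum_k\lambda(B_k)\leq C\sqrt{c}\cdot\pi^2/6 < 2\pi$ for $c$ sufficiently small. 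Picking any $\phi^\ast\in [0,2\pi)\setminus \bigcup_k B_k$ and setting $\mathsf{Cor}_1 := \mathsf{Cor}_{\phi^\ast}\in\mathbf{BELL}$, every $k$-query $\epsilon$-error reduction from $\mathsf{Cor}_1$ to $\mathsf{Cor}_2$ satisfies $\epsilon > \epsilon_k$, i.e.\ $k^4 N_k^2 \geq \Omega(1/\epsilon)$, and substituting $N_k^2\leq (2|X|)^{4|A|^k}(2|Y|)^{4|B|^k}$ gives the theorem.

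\textbf{Main obstacle.} The main technical hurdle is the key lemma, and its sharp form relies critically on placing the curve on the Tsirelson boundary. This is what guarantees the uniform amplitude bound $R_i\leq 2\sqrt{2}$ for \emph{every} polytope vertex $v_i\in\mathbf{Q}$ and in turn confines each bad arc to length $O(\sqrt{\epsilon})$. A naïve family (for example, Bob measuring at $(\phi,\phi+\pi/2)$) would pass through the interior of $\mathbf{Q}$ and could be covered wholesale by $\mathbf{L}_k$, forcing a substantially weaker estimate---a Cauchy--Schwarz-based substitute gives only $O(N\epsilon^{1/4})$---and degrading the final exponent of the strategy count from $2$ to $4$.
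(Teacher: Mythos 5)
Your proposal is correct, and it takes a genuinely different route from the paper's, even though the overall skeleton (count deterministic $k$-query protocols, show a one-parameter Tsirelson-saturating curve in $\mathbf{BELL}$ can be $\epsilon$-approximated by any fixed quantum box only on an $O(\sqrt\epsilon)$-measure set, diagonalize with $\epsilon_k\asymp 1/(k^4N_k^2)$) is the same. The paper parameterizes by the input bias $p\in[1/2,1]$ of the biased CHSH game $\mathrm{CHSH}[p,1/2]$: each deterministic protocol wins with probability that is an \emph{affine} function $\ell(p)$, the quantum optimum is $\omega(p)=\tfrac12+\tfrac12\sqrt{p^2+(1-p)^2}$ (imported from Lawson--Linden--Popescu, Lemmas~\ref{lem:biased-chsh-positive} and~\ref{lem:biased-chsh-negative}), and the quantitative core (Lemma~\ref{lem:line-approximation-error}) is that a uniformly convex function is $\epsilon$-close to any given line only on an $O(\sqrt\epsilon)$-subinterval, proved by a Rolle/interpolation argument. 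You instead parameterize by a rotation angle $\phi$ of Bob's measurements, work directly with correlation boxes in TV distance rather than winning probabilities, and replace the convex-interpolation lemma with a trigonometric one: the curve $\phi\mapsto\mathsf{Cor}_\phi$ saturates $\mathcal{B}_\phi=2\sqrt2$ for all $\phi$, while each vertex $v$ of the reduction polytope traces a cosine $R\cos(\phi-\phi^\ast)$ with amplitude $R\le 2\sqrt2$, hence lies within $O(\epsilon)$ of $2\sqrt2$ only on an arc of length $O(\sqrt\epsilon)$. The two ``$O(\sqrt\epsilon)$'' lemmas are two coordinatizations of the same second-order phenomenon (curvature of $\omega$ vs.\ $\cos''(0)=-1$), but the dependencies differ: you bypass the Lawson et al.\ optimal biased-CHSH formula entirely, at the cost of needing the Tsirelson bound $2\sqrt2$ for the whole tilted family $\mathcal B_\phi$ with $(C^{(\phi)})^TC^{(\phi)}=2I$. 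You assert this as known; since the bound is load-bearing (it is what gives $R_i\le2\sqrt2$ for every vertex), you should spell it out --- it follows from Tsirelson's vector characterization: for unit vectors with $\langle u_0,u_1\rangle=t$ the value is $\sqrt{2-2t\cos2\phi}+\sqrt{2+2t\cos2\phi}\le 2\sqrt2$ --- but this is routine and not a gap in the argument.
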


	Conversely, for any $\epsilon > 0$, we give a simple construction of $\mathsf{Cor}: [T] \times [T] \to \{0, 1\} \times \{0, 1\}$ with $T \leq O(1/\epsilon^2)$ such that $\mathsf{Cor} \in \mathbf{BELL}$ and every correlation box in $\mathbf{BELL}$ reduces to $\mathsf{Cor}$ via a $1$-query $\epsilon$-error reduction. Notice that for $|A| = |B| = 2, k = 1$, Theorem~\ref{thm:lower-bound} implies that $|X| \cdot |Y|$ must be at least $1/\epsilon^{\Omega(1)}$. On the other hand, when $|A|, |B|, k$ are large, our lower bound might be very far from tight.

	\subsection{Related work}
	A long line of work \cite{mau92, bct99, ste00, cgm00, csi02, coa02, bt03, tb03} investigated the problem of simulating Bell correlations using classical communication, culminating in a theorem by Toner and Bacon \cite{tb03} that states that $\mathbf{BELL}$ can be simulated using shared randomness and a single classical bit of one-way communication. This result should be thought of as giving an \emph{upper bound} on the power of $\mathbf{BELL}$. Obviously it is a \emph{loose} upper bound, since $\mathbf{BELL} \subseteq \mathbf{NS}$.
	
	Cerf et al. \cite{cgmp05} improved on the Toner-Bacon theorem by showing that instead of a bit of communication, it suffices to have a single PR box. In our terminology, Cerf et al. showed that $\mathsf{PR}$ is $\mathbf{BELL}$-hard with respect to $1$-query reductions. Part of what makes this result so appealing is that $\mathsf{PR}$ has \emph{finite alphabets}, making it an extremely \emph{explicit} upper bound on $\mathbf{BELL}$. (Similarly with the Toner-Bacon theorem before it.)
	
	It is natural to hope to push even further and replace $\mathsf{PR}$ with some finite-alphabet correlation box in $\mathbf{Q}$. Our results dash this hope, even for the special case of simulating \emph{binary} correlation boxes in $\mathbf{BELL}$.
	
	In another direction, several works \cite{bp05, blm+05, dgh+07, bbl+06, fww09, bs09, vd13} have investigated the power of correlation boxes in their own right, apart from quantum entanglement. Two such works are particularly relevant to the present paper. First, Barrett and Pironio showed \cite{bp05} that every correlation box in $\mathbf{NS}$ with binary output alphabets reduces to $\mathsf{PR}$. Our result shows that there is no corresponding phenomenon for $\mathbf{BELL}$. Second, Dupuis et al. \cite{dgh+07} showed that no finite-alphabet correlation box is $\mathbf{NS}$-complete. Our result can be thought of as a ``scaled down'' version of this second result.
	
	\subsection{Proof overview} \label{sec:techniques}
	The \emph{biased CHSH game} is a variant of the well-studied CHSH game. In the biased game, Alice and Bob's input bits are not uniformly distributed. We will consider the case that their inputs are independent, Alice's is uniform, and Bob's has bias $p \in [1/2, 1]$. Alice and Bob know $p$, i.e. their strategy may depend on $p$. (See Section~\ref{sec:biased-chsh} for details.) We use a result by Lawson, Linden, and Popescu \cite{llp10} that states that the optimal quantum strategy for the biased CHSH game can be implemented in $\mathbf{BELL}$ and wins with probability $\frac{1}{2} + \frac{1}{2} \sqrt{p^2 + (1 - p)^2}$. Throughout this paper, we will let $\omega: \R \to \R$ denote this optimal success probability:
	\[
		\omega(p) \stackrel{\text{def}}{=} \frac{1}{2} + \frac{1}{2} \sqrt{p^2 + (1 - p)^2}.
	\]
	
	To prove Theorem~\ref{thm:main}, fix $\mathsf{Cor}: X \times Y \to A \times B$. Assume $\mathsf{Cor}$ is $\mathbf{BELL}$-hard; then for any $p$, there is some strategy for playing the biased CHSH game using finitely many copies of $\mathsf{Cor}$ that wins with probability $\omega(p)$. We can fix the shared randomness of the strategy without decreasing the probability of winning. Assume that $\mathsf{Cor} \in \mathbf{Q}$; then fixing the shared randomness must not have \emph{increased} the probability of winning. So the probability of winning is still exactly $\omega(p)$.
	
	But it is easy to show that for any deterministic strategy, the probability of winning is some affine function of $p$. If $X, Y$ are countable and $A, B$ are finite, there are only countably many deterministic strategies, and hence there are only countably many affine functions floating around. There must be some point $p$ where $\omega(p)$ disagrees with all of these affine functions, a contradiction.
	
	To prove our quantitative lower bound (Theorem~\ref{thm:lower-bound}), we extend the preceding argument by analyzing the distance between $\omega$ and any affine function at a randomly chosen point $p$.

	\subsection{Outline of this paper}
	In Section~\ref{sec:prelim}, we provide more detailed definitions of $\mathbf{SR}, \mathbf{BELL}, \mathbf{Q}$ and of our reduction model. In Section~\ref{sec:negative}, we prove our main, negative results. In Section~\ref{sec:chain}, we derive a simple consequence of our main result: there is an infinite chain of harder and harder finite-alphabet correlation boxes in $\mathbf{BELL}$. In Section~\ref{sec:positive}, we present our simple positive result. Finally, in Section~\ref{sec:open-problems}, we list some open problems.
	
	\section{Preliminaries} \label{sec:prelim}
	
	\subsection{Quantum and classical simulations}
	
	In this section, we give the technical definitions of $\mathbf{SR}, \mathbf{BELL}, \mathbf{Q}$. The reader who feels that these classes are intuitively clear may feel free to skip this section.
	
	Suppose $\mathcal{D}$ is a probability distribution over a class $\mathbf{C}$ of correlation boxes $X \times Y \to A \times B$. Then $\mathcal{D}$ induces a single correlation box $\mathsf{Cor}_\mathcal{D} : X \times Y \to A \times B$ defined by
	\begin{equation} \label{eqn:shared-randomness}
		\Pr[\mathsf{Cor}_{\mathcal{D}}(x, y) = (a, b)] = \E_{\mathsf{Cor} \sim \mathcal{D}}[\Pr[\mathsf{Cor}(x, y) = (a, b)]].
	\end{equation}
	(Intuitively, $\mathcal{D}$ models shared randomness. The distribution $\mathsf{Cor}_{\mathcal{D}}(x, y)$ is determined by sampling $\mathsf{Cor}$ from $\mathcal{D}$ and then using ``fresh randomness'' to sample $(a, b)$ from $\mathsf{Cor}(x, y)$.)
	
	Suppose $\mathbf{C}$ is a class of correlation boxes. We say that $\mathbf{C}$ is \emph{closed under convex combinations} if for every $X, Y, A, B$, for every distribution $\mathcal{D}$ over correlation boxes $X \times Y \to A \times B$ in $\mathbf{C}$, the box $\mathsf{Cor}_{\mathcal{D}}$ is also in $\mathbf{C}$.
	
	\begin{defn}
		We define $\mathbf{SR}$ to be the closure under convex combinations of the class of correlation boxes of the form $\mathsf{Cor}(x, y) = (f(x), g(y))$, where $f, g$ are (deterministic) functions.
	\end{defn}

	\begin{defn}
		We define $\mathbf{BELL}$ to be the closure under convex combinations of the class of correlation boxes $\mathsf{Cor}: X \times Y \to A \times B$ of the following form. For each $x \in X, y \in Y$, let $U_x, V_y$ be associated $2 \times 2$ unitary matrices. Define binary random variables $S_{x, y}, T_{x, y}$ by
		\[
			\Pr[(S_{x, y}, T_{x, y}) = (s, t)] = |\braket{st|U_x \otimes V_y|\phi}|^2,
		\]
		where $\ket{\phi} = \frac{1}{\sqrt{2}}(\ket{00} + \ket{11})$. Let $\mathsf{Cor}(x, y) = (f(S_{x, y}), g(T_{x, y}))$, where $f, g$ are (deterministic) functions.
	\end{defn}

	\begin{defn} \label{def:q}
		We define $\mathbf{Q}$ to be the closure under convex combinations of the class of correlation boxes $\mathsf{Cor}: X \times Y \to A \times B$ of the following form. Let $\rho$ be a bipartite mixed state on $\C^n \otimes \C^m$ for some finite $n, m$. For each $x \in X, y \in Y$, let $\{A_x^a\}_{a \in A}, \{B_y^b\}_{b \in B}$ be associated POVMs, where each $A_x^a$ acts on $\C^n$ and each $B_y^b$ acts on $\C^m$. The output distribution $\mathsf{Cor}(x, y)$ is defined by measuring $\rho$ using the POVMs associated with $x$ and $y$:
		\[
			\Pr[\mathsf{Cor}(x, y) = (a, b)] = \Tr((A_x^a \otimes B_y^b) \rho).
		\]
	\end{defn}

	Notice that Definition~\ref{def:q} allows for a $\mathbf{Q}$ protocol in which the shared quantum state $\rho$ is picked at random from some distribution over bipartite quantum states with finite Hilbert space dimensions.
	
	\subsection{Details of the reduction model} \label{sec:reduction-model}
	Let $\mathsf{Cor}_2: X_2 \times Y_2 \to A_2 \times B_2$ be a correlation box. In a \emph{deterministic $k$-query $\mathsf{Cor}_2$-protocol $\Pi: X_1 \times Y_1 \to A_1 \times B_1$}, Alice receives as input $x \in X_1$ and Bob receives $y \in Y_1$. The players make exactly $k$ queries $(x_1, y_1), \dots, (x_k, y_k)$ and get exactly $k$ responses $(a_1, b_1), \dots, (a_k, b_k)$, where each $x_i \in X_2, y_i \in Y_2, a_i \in A_2, b_i \in B_2$. The queries may be chosen adaptively, i.e. $x_i$ can be any deterministic function of $x, a_1, \dots, a_{i - 1}$ and $y_i$ can be any deterministic function of $y, b_1, \dots, b_{i - 1}$. The distribution of the responses is given by
	\[
		\Pr[(a_1, b_1), \dots, (a_k, b_k)] = \prod_{i = 1}^k \Pr[\mathsf{Cor}_2(x_i, y_i) = (a_i, b_i)],
	\]
	where $(x_i, y_i)$ is the $i$th query made by $\Pi$ when Alice and Bob see $(a_1, b_1), \dots, (a_{i - 1}, b_{i - 1})$ as the first $i - 1$ responses. At the end, Alice gives an output $a \in A_1$ and Bob gives an output $b \in B_1$. Here, $a$ is a deterministic function of $x, a_1, \dots, a_k$ and $b$ is a deterministic function of $y, b_1, \dots, b_k$. The distribution of $(a, b)$ as a function of $(x, y)$ defines a correlation box $\mathsf{Cor}_1: X_1 \times Y_1 \to A_1 \times B_1$; we say that $\Pi$ is a \emph{deterministic $k$-query reduction} from $\mathsf{Cor}_1$ to $\mathsf{Cor}_2$.

	A \emph{randomized $k$-query $\mathsf{Cor}_2$-protocol $\Pi: X_1 \times Y_1 \to A_1 \times B_1$} is a probability distribution $\mathcal{D}$ over deterministic $k$-query $\mathsf{Cor}_2$-protocols $\Pi': X_1 \times Y_1 \to A_1 \times B_1$. (This models ``shared randomness''.) This distribution induces a probability distribution $\mathcal{D}'$ over correlation boxes $X_1 \times Y_1 \to A_1 \times B_1$. Let $\mathsf{Cor}_1 = \mathsf{Cor}_{\mathcal{D}'}$, as defined in Equation~\ref{eqn:shared-randomness}. We say that $\Pi$ is a (randomized) \emph{$k$-query reduction} from $\mathsf{Cor}_1$ to $\mathsf{Cor}_2$.
	
	Suppose $\mathsf{Cor}_1, \mathsf{Cor}_1'$ are two correlation boxes on the same alphabets. We say that $\mathsf{Cor}_1$ is \emph{$\epsilon$-close} to $\mathsf{Cor}_1'$ if for every $x, y$, the distributions $\mathsf{Cor}_1(x, y)$, $\mathsf{Cor}_1'(x, y)$ are $\epsilon$-close in total variation distance. An \emph{$\epsilon$-error reduction} from $\mathsf{Cor}_1$ to $\mathsf{Cor}_2$ is a reduction from $\mathsf{Cor}_1'$ to $\mathsf{Cor}_2$ for some $\mathsf{Cor}_1'$ that is $\epsilon$-close to $\mathsf{Cor}_1$.
	
	\subsection{Closure}
	\begin{lem} \label{lem:closure}
		Suppose $\mathsf{Cor}_1$ reduces to $\mathsf{Cor}_2 \in \mathbf{Q}$. Then $\mathsf{Cor}_1 \in \mathbf{Q}$.
	\end{lem}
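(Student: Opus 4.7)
The plan is to peel off the two layers of randomness in a reduction—the shared randomness of the protocol and the adaptive classical queries to $\mathsf{Cor}_2$—and absorb them into the structure of a $\mathbf{Q}$-implementation. Let me first reduce to the deterministic case. A randomized $k$-query $\mathsf{Cor}_2$-protocol is by definition a convex combination of deterministic $k$-query $\mathsf{Cor}_2$-protocols, so if every deterministic reduction preserves $\mathbf{Q}$-membership, then closure of $\mathbf{Q}$ under convex combinations (which is built into Definition~\ref{def:q}) finishes the job. So I only need to show: if $\Pi$ is a deterministic $k$-query reduction from $\mathsf{Cor}_1$ to $\mathsf{Cor}_2$ and $\mathsf{Cor}_2 \in \mathbf{Q}$, then $\mathsf{Cor}_1 \in \mathbf{Q}$.

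Since $\mathbf{Q}$ itself is closed under convex combinations, I can furthermore assume $\mathsf{Cor}_2$ is implemented by a single bipartite mixed state $\rho$ on $\C^n \otimes \C^m$ together with POVMs $\{A_{x'}^{a'}\}$, $\{B_{y'}^{b'}\}$ for $x' \in X_2$, $y' \in Y_2$. The natural candidate $\mathbf{Q}$-implementation of $\mathsf{Cor}_1$ uses the shared state $\rho^{\otimes k}$ on $(\C^n)^{\otimes k} \otimes (\C^m)^{\otimes k}$, with Alice holding the $k$ copies of the $\C^n$ registers and Bob the $k$ copies of $\C^m$. For each Alice input $x \in X_1$ and each string $(a_1,\dots,a_k) \in A_2^k$, the protocol $\Pi$ determines an adaptive sequence of queries $x_i = x_i(x, a_1,\dots,a_{i-1})$ and a final output $a = f(x, a_1,\dots,a_k)$. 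I define
\[
\tilde A_x^{a} \;=\; \sum_{\substack{(a_1,\dots,a_k) \in A_2^k \\ f(x, a_1,\dots,a_k) = a}} A_{x_1}^{a_1} \otimes A_{x_2}^{a_2} \otimes \cdots \otimes A_{x_k}^{a_k},
\]
with $x_i$ as above, and define $\tilde B_y^{b}$ analogously on Bob's side.

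The verification has two parts. First, each $\tilde A_x^a$ is positive semidefinite as a sum of tensor products of PSD operators, and the family $\{\tilde A_x^a\}_{a \in A_1}$ sums to the identity: summing over $a \in A_1$ amounts to summing over all $(a_1,\dots,a_k) \in A_2^k$, and this sum telescopes inward, since for any fixed $(a_1,\dots,a_{k-1})$ the value $x_k$ is determined and $\sum_{a_k} A_{x_k}^{a_k} = I$; iterating reduces the sum to $I^{\otimes k}$. Second, I need to check that $\Tr\bigl((\tilde A_x^a \otimes \tilde B_y^b)\, \rho^{\otimes k}\bigr)$ equals $\Pr[\mathsf{Cor}_1(x,y)=(a,b)]$. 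Since $\rho^{\otimes k}$ is a product across the $k$ copies, the trace factorizes into a product of terms $\Tr((A_{x_i}^{a_i} \otimes B_{y_i}^{b_i}) \rho) = \Pr[\mathsf{Cor}_2(x_i,y_i)=(a_i,b_i)]$, and then summing over transcripts $(a_1,b_1),\dots,(a_k,b_k)$ consistent with outputs $(a,b)$ recovers exactly the distribution defining $\Pi$.

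The main subtlety—really the only nontrivial point—is that the indices $x_i, y_i$ are adaptive functions of the earlier outcomes, so the tensor-product structure of $\tilde A_x^a$ is not a literal tensor product of fixed POVMs but rather a sum over branches of a measurement tree. One has to be careful that Alice's branching depends only on her observations (the $a_j$, not the $b_j$) and Bob's only on his, which is exactly the non-adaptivity-across-parties built into the reduction model of Section~\ref{sec:reduction-model}; this is what keeps the operator $\tilde A_x^a \otimes \tilde B_y^b$ a genuine product across the Alice/Bob cut and lets the factorization above go through. Once these points are noted the lemma follows.
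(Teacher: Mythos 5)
Your proof is correct and takes essentially the same approach as the paper's proof sketch: share $k$ independent copies of the state underlying $\mathsf{Cor}_2$ and use the $i$th copy to answer the $i$th query. The paper stops at the high-level description ``run the reduction,'' whereas you unfold it explicitly by first reducing to a deterministic protocol and a single state $\rho$ via closure under convex combinations, and then building the aggregated POVM elements $\tilde A_x^a$ as sums over branches of the adaptive query tree, with the telescoping normalization and the factorization of $\Tr$ over $\rho^{\otimes k}$ supplying the verification the paper leaves implicit.
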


	\begin{proof}[Proof sketch]
		Say the reduction makes $k$ queries. The protocol witnessing $\mathsf{Cor}_2 \in \mathbf{Q}$ defines a probability distribution over bipartite quantum states. To simulate $\mathsf{Cor}_1$, Alice and Bob share $\rho_1 \otimes \dots \otimes \rho_k$, where the $\rho_i$s are drawn independently at random from that distribution. They run the reduction, using $\rho_i$ to simulate the $i$th query.
	\end{proof}
	
	We remark that $\mathbf{SR}$ and $\mathbf{NS}$ are also easily seen to be closed under reductions; $\mathbf{BELL}$ is closed under $1$-query reductions.
	
	\section{Negative results} \label{sec:negative}
	
	\subsection{The biased CHSH game} \label{sec:biased-chsh}
	
	For real numbers $p, q \in [0, 1]$, the \emph{biased CHSH game} $\mathrm{CHSH}[p, q]$ is a nonlocal game defined as follows \cite{llp10}: The referee picks $x, y \in \{0, 1\}$ independently at random, with $\Pr[x = 1] = p$, $\Pr[y = 1] = q$. Alice gets $x$ and Bob gets $y$. Alice outputs $a \in \{0, 1\}$ and Bob outputs $b \in \{0, 1\}$. The win condition is that $a + b = xy \pmod{2}$. The standard CHSH game \cite{chsh69} is the case $p = q = \frac{1}{2}$.
	
	We can think of a correlation box $\mathsf{Cor}: \{0, 1\} \times \{0, 1\} \to \{0, 1\} \times \{0, 1\}$ as a \emph{strategy} for the biased CHSH game. The probability that $\mathsf{Cor}$ wins $\mathrm{CHSH}[p, q]$ is just the probability that $a + b = xy \pmod{2}$, where $(a, b) = \mathsf{Cor}(x, y)$ and the probability is over both the internal randomness of $\mathsf{Cor}$ and the inputs $x, y$. (The inputs $(x, y)$ are independent of the internal randomness of $\mathsf{Cor}$.)
	
	Lawson et al. showed that like in the ordinary CHSH game, quantum entanglement gives an advantage in the biased CHSH game, at least in certain parameter regimes:
	\begin{lem}[\cite{llp10}] \label{lem:biased-chsh-positive}
		If $\frac{1}{2} \leq q \leq \frac{1}{2p} \leq 1$, then there exists a binary-alphabet correlation box $\mathsf{S}_{p, q} \in \mathbf{BELL}$ that wins $\mathrm{CHSH}[p, q]$ with probability $\frac{1}{2} + \frac{1}{2} \sqrt{2} \sqrt{q^2 + (1 - q)^2} \sqrt{p^2 + (1 - p)^2}$.
	\end{lem}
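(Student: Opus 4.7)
The plan is to construct $\mathsf{S}_{p,q}$ explicitly using a maximally entangled qubit pair and four real-valued projective measurements, then maximize analytically. Let Alice and Bob share $\ket\phi = \frac{1}{\sqrt 2}(\ket{00}+\ket{11})$, and for each $x \in \{0,1\}$ let $U_x$ rotate the standard basis by an angle $\alpha_x$ in the $xz$-plane of the Bloch sphere; let Bob's $V_y$ be similarly parameterized by $\beta_y$. Take $f = g = \mathrm{id}$, so the measurement outcomes are the outputs. A standard computation for $\ket\phi$ gives the correlation $\braket{(-1)^{a+b}}_{x,y} = \cos(\alpha_x - \beta_y)$.

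First I would rewrite the win probability of $\mathrm{CHSH}[p,q]$ in terms of the correlations $C_{xy} := \cos(\alpha_x - \beta_y)$:
\[
\tfrac{1}{2} + \tfrac{1}{2}\bigl[(1-p)(1-q) C_{00} + (1-p)q\, C_{01} + p(1-q) C_{10} - pq\, C_{11}\bigr].
\]
Expanding each cosine via the angle-subtraction identity and collecting terms by Alice's angle lets one maximize in closed form over $\alpha_0, \alpha_1$: each pair of terms is the inner product of the unit vector $(\cos\alpha_x, \sin\alpha_x)$ with a vector depending only on Bob's angles, whose norm is easy to write down. After simplification, the bracketed expression collapses to
\[
(1-p)\sqrt{u + v} \;+\; p\sqrt{u - v},
\]
where $u = q^2 + (1-q)^2$, $v = 2q(1-q)\cos\theta$, and $\theta = \beta_0 - \beta_1$ is the only remaining free parameter, constrained to yield $v \in [-2q(1-q), 2q(1-q)]$.

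Next, I would differentiate in $v$, find the critical point $v^\star = u(1 - 2p)/(p^2 + (1-p)^2)$, and substitute back to obtain the unconstrained maximum
\[
\sqrt{2u\bigl(p^2+(1-p)^2\bigr)} \;=\; \sqrt{2}\sqrt{q^2+(1-q)^2}\sqrt{p^2+(1-p)^2},
\]
exactly the claimed gap above $1/2$. It remains only to verify that $v^\star$ is achieved by some $\theta$, i.e.\ that $|v^\star| \leq 2q(1-q)$; once this holds one sets $\theta = \arccos\bigl(v^\star/(2q(1-q))\bigr)$ and any convenient $\beta_0$, and $\mathsf{S}_{p,q}$ lies in $\mathbf{BELL}$ by construction.

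The main obstacle is this feasibility check, because the hypothesis $q \leq 1/(2p)$ must be shown to be exactly what makes $v^\star$ physically realizable. The substitution $s = 2q-1,\ t = 2p-1 \in [0,1]$ gives $u = (1+s^2)/2$, $2q(1-q) = (1-s^2)/2$, $|1-2p| = t$, and $p^2 + (1-p)^2 = (1+t^2)/2$, after which the inequality $|v^\star| \leq 2q(1-q)$ reduces algebraically to $s^2(1+t)^2 \leq (1-t)^2$, i.e.\ $(1+s)(1+t) \leq 2$, which is precisely $4pq \leq 2$. Thus on the stated parameter range the optimum is attained, completing the construction.
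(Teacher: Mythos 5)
The paper cites this lemma from Lawson--Linden--Popescu \cite{llp10} without giving a proof, so there is no internal argument to compare against; your proof stands on its own as a self-contained reconstruction. I verified the chain of computations and it is correct. In particular: the win probability of $\mathrm{CHSH}[p,q]$ is indeed $\tfrac12 + \tfrac12[(1-p)(1-q)C_{00} + (1-p)qC_{01} + p(1-q)C_{10} - pqC_{11}]$; with $C_{xy} = \cos(\alpha_x - \beta_y)$ on $(\ket{00}+\ket{11})/\sqrt 2$ (measurements in the $xz$-plane), maximizing over $\alpha_0,\alpha_1$ by Cauchy--Schwarz yields $(1-p)\sqrt{u+v} + p\sqrt{u-v}$ with $u = q^2+(1-q)^2$, $v = 2q(1-q)\cos(\beta_0-\beta_1)$; this is concave in $v$ with critical point $v^\star = u(1-2p)/[p^2+(1-p)^2]$, where it evaluates to $\sqrt{2u[p^2+(1-p)^2]}$; and the substitution $s = 2q-1,\ t = 2p-1$ reduces the feasibility condition $|v^\star| \le 2q(1-q)$ to $(1+s)(1+t) \le 2$, i.e.\ $2pq \le 1$, exactly the hypothesis $q \le 1/(2p)$. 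The one small thing worth stating explicitly for the reader is that the critical point is a maximum (your $F$ is concave since $F''(v) < 0$), and that the construction lands in $\mathbf{BELL}$ because the $U_x, V_y$ are genuine $2\times 2$ unitaries. Nicely done --- the feasibility check exposing $q \le 1/(2p)$ as the exact boundary is the key non-obvious step, and you handled it cleanly.
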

	
	Conversely, Lawson et al. also showed that Lemma~\ref{lem:biased-chsh-positive} is optimal:
	\begin{lem}[\cite{llp10}] \label{lem:biased-chsh-negative}
		Suppose $\frac{1}{2} \leq q \leq \frac{1}{2p} \leq 1$ and $\mathsf{Cor} \in \mathbf{Q}$. Then $\mathsf{Cor}$ wins $\mathrm{CHSH}[p, q]$ with probability at most $\frac{1}{2} + \frac{1}{2} \sqrt{2} \sqrt{q^2 + (1 - q)^2} \sqrt{p^2 + (1 - p)^2}$.
	\end{lem}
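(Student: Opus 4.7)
The plan is a Tsirelson-style factoring argument. I would first reduce to a canonical form and then bound the biased Bell functional via two applications of Cauchy--Schwarz together with a sum-of-squares identity.

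First, by Naimark dilation (to replace each two-outcome POVM by a projective measurement on a larger space) and purification of the shared mixed state, I may assume without loss of generality that Alice and Bob share a pure state $\ket{\psi}$ and that each party performs projective two-outcome measurements; both reductions preserve the joint output distribution. Define the Hermitian $\pm 1$-valued observables $A_x = P_x^{(0)} - P_x^{(1)}$ and $B_y = Q_y^{(0)} - Q_y^{(1)}$, which satisfy $A_x^2 = I$ and $B_y^2 = I$. Using $\Pr[a \oplus b = c \mid x, y] = \frac{1}{2}(1 + (-1)^c \braket{\psi | A_x \otimes B_y | \psi})$, a short computation gives $\Pr[\text{win}] = \frac{1}{2} + \frac{1}{2}\braket{\psi | \mathcal{B} | \psi}$, where
\begin{equation*}
\mathcal{B} = (1-p)(1-q) A_0 \otimes B_0 + (1-p)q A_0 \otimes B_1 + p(1-q) A_1 \otimes B_0 - pq A_1 \otimes B_1.
\end{equation*}
The key move is to factor on Bob's side: with $B_\pm := (1-q) B_0 \pm q B_1$, one checks $\mathcal{B} = (1-p)\, A_0 \otimes B_+ + p\, A_1 \otimes B_-$.

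Now I would bound $|\braket{\psi | \mathcal{B} | \psi}|$ by Cauchy--Schwarz applied twice. For each of the two summands, unitarity of $A_x$ gives $|\braket{\psi | A_x \otimes B_\pm | \psi}| = |\braket{(A_x \otimes I)\psi | (I \otimes B_\pm)\psi}| \leq \|(I \otimes B_\pm)\ket{\psi}\|$. The triangle inequality then yields $|\braket{\psi | \mathcal{B} | \psi}| \leq (1-p)\,\|(I \otimes B_+)\ket{\psi}\| + p\,\|(I \otimes B_-)\ket{\psi}\|$, and a second Cauchy--Schwarz in $\mathbb{R}^2$ bounds this by $\sqrt{p^2 + (1-p)^2}\,\sqrt{\|(I \otimes B_+)\ket{\psi}\|^2 + \|(I \otimes B_-)\ket{\psi}\|^2}$. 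Finally, the sum-of-squares identity
\begin{equation*}
B_+^2 + B_-^2 = 2(1-q)^2 B_0^2 + 2q^2 B_1^2 = 2\bigl(q^2 + (1-q)^2\bigr) I,
\end{equation*}
where the cross terms $\pm q(1-q)(B_0 B_1 + B_1 B_0)$ cancel and $B_y^2 = I$ is used, evaluates the inner square root to $\sqrt{2(q^2 + (1-q)^2)}$, yielding the claimed bound.

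The argument is essentially Tsirelson's proof of the CHSH bound, adapted to the biased weights $p, 1-p, q, 1-q$, so no step is genuinely delicate. The main ``obstacle'' is just bookkeeping: tracking the sign on the $A_1 \otimes B_1$ term (which is what makes $B_-$ rather than $B_+$ pair with $A_1$ and permits the nice cancellation), and noticing that the hypothesis $\frac{1}{2} \leq q \leq \frac{1}{2p} \leq 1$ plays no role in the upper bound itself — it merely identifies the regime in which the bound is tight and matches the construction of Lemma~\ref{lem:biased-chsh-positive}.
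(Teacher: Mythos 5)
The paper states this lemma as a black-box citation to Lawson, Linden, and Popescu \cite{llp10} and gives no proof of its own, so there is no in-paper argument to compare against. Your proof is a correct, self-contained derivation and is, as you say, the natural Tsirelson-style argument: reduce to pure states and $\pm 1$-valued observables, factor the biased Bell operator on Bob's side as $\mathcal{B} = (1-p)\, A_0 \otimes B_+ + p\, A_1 \otimes B_-$ with $B_\pm = (1-q)B_0 \pm q B_1$, apply Cauchy--Schwarz twice, and close with the sum-of-squares identity $B_+^2 + B_-^2 = 2(q^2 + (1-q)^2)I$. I checked the coefficient bookkeeping in the decomposition and the cancellation of the anticommutator cross terms; both are right, and the bound $|\braket{\psi|\mathcal{B}|\psi}| \le \sqrt{2}\,\sqrt{p^2+(1-p)^2}\,\sqrt{q^2+(1-q)^2}$ follows as claimed.

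Two small remarks. First, Definition~\ref{def:q} takes $\mathbf{Q}$ to be the closure under convex combinations (shared randomness) of the direct-measurement boxes; your argument as written bounds each extreme point, so you should add one sentence noting that the winning probability is affine in the correlation box and the bound therefore survives averaging. Second, your observation that the hypothesis $\tfrac{1}{2} \le q \le \tfrac{1}{2p} \le 1$ plays no role in the upper bound is correct---it is there only because that is the regime in which Lemma~\ref{lem:biased-chsh-positive} makes the bound tight---and since the paper only ever invokes the lemma at $q = 1/2$, $p \in [1/2,1]$, nothing is lost by your slightly more general statement.
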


	\subsection{Reductions imply affine approximations}
	
	\begin{lem} \label{lem:box-lines}
		Suppose $\mathsf{Cor}: X \times Y \to A \times B$ is a correlation box in $\mathbf{Q}$, and fix $k \in \N$. For each $p \in [1/2, 1]$, let $\mathsf{S}_{p, 1/2}$ be the box of Lemma~$\ref{lem:biased-chsh-positive}$. There is some set $L_{\mathsf{Cor}, k}$ of affine functions $\R \to \R$ such that:
		\begin{enumerate}
			\item For every $p \in [1/2, 1]$ and every $\epsilon > 0$, if there exists a $k$-query $\epsilon$-error reduction from $\mathsf{S}_{p, 1/2}$ to $\mathsf{Cor}$, then there exists $\ell \in L_{\mathsf{Cor}, k}$ such that $|\ell(p) - \omega(p)| \leq \epsilon$.
			\item If $X, Y, A, B$ are all finite, then
			\[
			|L_{\mathsf{Cor}, k}| \leq (2|X|)^{2|A|^k} \cdot (2|Y|)^{2|B|^k}.
			\]
			If $X, Y$ are countable and $A, B$ are finite, then $L_{\mathsf{Cor}, k}$ is countable.
		\end{enumerate}
	\end{lem}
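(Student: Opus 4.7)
The plan is to take $L_{\mathsf{Cor}, k}$ to be the set of functions $p \mapsto W_\Pi(p)$, where $W_\Pi(p)$ denotes the probability that $\Pi$ wins $\mathrm{CHSH}[p, 1/2]$, as $\Pi$ ranges over all deterministic $k$-query $\mathsf{Cor}$-protocols with Alice and Bob's own input and output alphabets both equal to $\{0, 1\}$. The key observation is that each $W_\Pi$ is affine in $p$: expanding,
\[
W_\Pi(p) = \sum_{x, y \in \{0, 1\}} \Pr[x] \cdot \Pr[y] \cdot P_\Pi(x, y),
\]
where $P_\Pi(x, y) := \Pr[\Pi \text{ outputs } (a, b) \text{ with } a + b \equiv xy \pmod 2 \mid \text{input } (x, y)]$ does not depend on $p$, while the marginals satisfy $\Pr[x = 1] = p$ and $\Pr[y = 1] = 1/2$.

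For the counting bound, a deterministic $k$-query $\mathsf{Cor}$-protocol is specified by two adaptive decision trees of depth $k$, one per player, with branching factors $|A|$ (Alice) and $|B|$ (Bob). For each of Alice's $2$ possible input bits, her tree has at most $|A|^k$ internal nodes, each labeled by a query in $X$, and $|A|^k$ leaves, each labeled by an output bit, giving at most $\bigl(|X|^{|A|^k} \cdot 2^{|A|^k}\bigr)^2 = (2|X|)^{2 |A|^k}$ Alice-strategies. By symmetry Bob has at most $(2|Y|)^{2|B|^k}$ strategies, so the product bounds $|L_{\mathsf{Cor}, k}|$ in the finite case; when $X$ and $Y$ are merely countable, the same count exhibits $L_{\mathsf{Cor}, k}$ as a countable union of finite sets.

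Finally, for property (1), suppose a $k$-query $\epsilon$-error reduction from $\mathsf{S}_{p, 1/2}$ to $\mathsf{Cor}$ exists. It realizes a box $\mathsf{Cor}_1'$ that is $\epsilon$-close pointwise in total variation to $\mathsf{S}_{p, 1/2}$ and, by Lemma~\ref{lem:closure}, lies in $\mathbf{Q}$. Since $\mathsf{S}_{p, 1/2}$ wins $\mathrm{CHSH}[p, 1/2]$ with probability $\omega(p)$ (Lemma~\ref{lem:biased-chsh-positive}) and pointwise total-variation distance $\epsilon$ shifts the winning probability by at most $\epsilon$, the box $\mathsf{Cor}_1'$ wins $\mathrm{CHSH}[p, 1/2]$ with probability in $[\omega(p) - \epsilon, \omega(p)]$, the upper bound coming from Lemma~\ref{lem:biased-chsh-negative}. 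Averaging over the shared randomness of the reduction, some deterministic protocol $\Pi$ in its support must satisfy $W_\Pi(p) \ge \omega(p) - \epsilon$; this $\Pi$ too realizes a box in $\mathbf{Q}$ (Lemma~\ref{lem:closure}) and so also satisfies $W_\Pi(p) \le \omega(p)$, yielding $|W_\Pi(p) - \omega(p)| \le \epsilon$. The main subtlety I expect is in this last step: the upper bound of Lemma~\ref{lem:biased-chsh-negative} must be applied to the individual deterministic protocol $\Pi$ produced by averaging rather than to the full randomized reduction, and it is Lemma~\ref{lem:closure} applied to $\Pi$ that makes this legal.
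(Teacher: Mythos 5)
Your proposal is correct and follows essentially the same route as the paper: define $L_{\mathsf{Cor},k}$ as the set of win-probability functions of deterministic $k$-query $\mathsf{Cor}$-protocols, observe affinity, count protocols by counting adaptive decision trees, and for item 1 pass from the randomized reduction to a best deterministic protocol in its support and then invoke Lemma~\ref{lem:closure} together with Lemma~\ref{lem:biased-chsh-negative} to sandwich its win probability. Your closing remark about why Lemma~\ref{lem:biased-chsh-negative} must be applied to the individual deterministic protocol rather than the randomized reduction correctly identifies the one real subtlety, and the paper handles it the same way.
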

	
	\begin{proof}
		For a deterministic $\mathsf{Cor}$-protocol $\Pi$, let $\ell_{\Pi}(p)$ be the probability that $\Pi$ wins $\mathrm{CHSH}[p, 1/2]$. Then $\ell_{\Pi}$ is an affine function, since it is just
		\[
		\frac{1 - p}{2} P_{00} + \frac{p}{2} P_{10} + \frac{1 - p}{2} P_{01} + \frac{p}{2} P_{11},
		\]
		where $P_{xy}$ is the probability that $a + b = xy \pmod{2}$ where $(a, b) = \Pi(x, y)$. Let $L_{\mathsf{Cor}, k}$ be the set of all $\ell_{\Pi}$.
		
		To prove the first item, let $\Pi$ be a $k$-query $\epsilon$-reduction from $S_{p, 1/2}$ to $\mathsf{Cor}$. Recall that $\Pi$ is a distribution over deterministic $\mathsf{Cor}$-protocols $\Pi'$. Let $g(\Pi')$ be the probability that $\Pi'$ wins $\mathrm{CHSH}[p, 1/2]$. By the correctness of the reduction, we know that
		\[
		\left|\E_{\Pi' \sim \Pi}[g(\Pi')] - \omega(p)\right| \leq \epsilon.
		\]
		The best case is at least as good as the average case, so there exists a deterministic $\mathsf{Cor}$-protocol $\Pi'_*$ such that $g(\Pi'_*) \geq \omega(p) - \epsilon$. Since $\mathsf{Cor} \in \mathbf{Q}$, by Lemma~\ref{lem:closure}, $\Pi'_*$ implements a correlation box in $\mathbf{Q}$. Therefore, by Lemma~\ref{lem:biased-chsh-negative}, $g(\Pi'_*) \leq \omega(p)$. Therefore, $|g(\Pi'_*) - \omega(p)| \leq \epsilon$. By the construction of $L_{\mathsf{Cor}, k}$, there is some $\ell \in L_{\mathsf{Cor}, k}$ such that $g(\Pi'_*) = \ell(p)$, and hence $|\ell(p) - \omega(p)| \leq \epsilon$.
		
		To prove the second item, we bound the cardinality of $L_{\mathsf{Cor}, k}$ simply by bounding the number of deterministic $k$-query $\mathsf{Cor}$-protocols. Such a protocol can be specified by:
		\begin{itemize}
			\item Functions $q_i: \{0, 1\} \times A^{i - 1} \to X$ for each $1 \leq i \leq k$, telling the $i$th query that Alice makes as a function of her input and the query responses she has seen so far.
			\item Corresponding functions $r_i: \{0, 1\} \times B^{i - 1} \to Y$ for Bob.
			\item A function $s: \{0, 1\} \times A^k \to \{0, 1\}$, telling the output Alice gives as a function of her input and all query responses.
			\item A corresponding function $t: \{0, 1\} \times B^k \to \{0, 1\}$ for Bob.
		\end{itemize}
		If $X, Y$ are countable and $A, B$ are finite, then there are only countably many possibilities for each of these functions, so there are countably many such protocols. Suppose now that $X, Y, A, B$ are all finite and $|A|, |B| \geq 2$. The number of possible functions $q_i$ is $|X|^{2|A|^{i - 1}}$, and similarly for $r_i$. The number of possible functions $s$ is $2^{2|A|^k}$, and similarly for $t$. Therefore, the number of affine functions is bounded by
		\begin{align*}
		\left(\prod_{i = 1}^k |X|^{2|A|^{i - 1}}\right) \left(\prod_{i = 1}^k |Y|^{2|B|^{i - 1}}\right) \cdot 2^{2|A|^k} \cdot 2^{2|B|^k} &= |X|^{2 \sum_i |A|^{i - 1}} \cdot |Y|^{2 \sum_i |B|^{i - 1}} \cdot 2^{2|A|^k} \cdot 2^{2|B|^k} \\
		&\leq |X|^{2|A|^k} \cdot |Y|^{2|B|^k} \cdot 2^{2|A|^k} \cdot 2^{2|B|^k} \\
		&= (2|X|)^{2|A|^k} \cdot (2|Y|)^{2|B|^k}.
		\end{align*}
		Finally, if $A$ is a singleton set, the step above where we bounded the geometric series $\sum_i |A|^{i - 1}$ by $|A|^k$ was not valid, but in this case the functions $q_i$ do not need to be specified anyway, so the final bound still holds. Similarly if $B$ is a singleton set.
	\end{proof}

	\subsection{Lower bounds on the error of affine approximations}
	
	For our qualitative negative result (Theorem~\ref{thm:main}), the following trivial fact is sufficient.
	
	\begin{lem} \label{lem:countable-lines-approximation}
		Suppose $L$ is a countable set of affine functions $\R \to \R$. Then there is some $p \in [1/2, 1]$ such that for every $\ell \in L$, $\ell(p) \neq \omega(p)$.
	\end{lem}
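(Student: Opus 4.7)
The plan is a pure counting argument: each affine $\ell \in L$ agrees with $\omega$ at only finitely many points, so the ``bad'' set of $p$ where some $\ell$ matches $\omega$ is a countable union of finite sets and hence countable, whereas $[1/2,1]$ is uncountable.

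First, I would verify that $\omega$ is strictly convex on $\R$. Writing $\omega(p) = \tfrac{1}{2} + \tfrac{1}{2}\sqrt{2p^2 - 2p + 1}$ and differentiating twice gives
\[
\omega''(p) = \frac{1}{2(2p^2 - 2p + 1)^{3/2}},
\]
which is strictly positive everywhere, since the radicand equals $2(p - \tfrac{1}{2})^2 + \tfrac{1}{2} \geq \tfrac{1}{2}$, so $\omega$ is smooth with positive second derivative.

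Second, for any affine $\ell$, the difference $\omega - \ell$ is also strictly convex, and a strictly convex function $\R \to \R$ has at most two zeros. Hence $Z_\ell := \{p \in \R : \omega(p) = \ell(p)\}$ has at most two elements. Taking the union over the countable family $L$, the set $Z := \bigcup_{\ell \in L} Z_\ell$ is countable, while $[1/2,1]$ is uncountable. So we may choose any $p \in [1/2,1] \setminus Z$, and this $p$ satisfies the conclusion.

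There is no real obstacle; the only genuine computation is the strict convexity check. An alternative that sidesteps calculus is to square the equation $\omega(p) = \ell(p)$ (valid since $\omega(p) \geq 1/2$) to reduce it to a polynomial equation of degree at most $2$ in $p$, and then observe that this polynomial is never identically zero because $2p^2 - 2p + 1$ has negative discriminant and therefore cannot equal the square of a linear polynomial; this again gives $|Z_\ell| \leq 2$ and the rest of the argument is unchanged.
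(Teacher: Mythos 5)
Your argument is correct and structurally identical to the paper's: show each affine $\ell$ meets $\omega$ in at most two points, so the ``bad'' set $\bigcup_{\ell\in L} Z_\ell$ is countable and therefore cannot cover $[1/2,1]$. The only difference is the micro-step establishing $|Z_\ell|\le 2$. Your primary route uses strict convexity of $\omega$ (via $\omega''>0$), whereas the paper argues purely algebraically: square $\ell(p)=\omega(p)$ to get $p^2+(1-p)^2=r(p)^2$ with $r$ affine, and note that the left side has discriminant $-4\neq 0$, so the equation cannot be an identity and has at most two roots. Your parenthetical ``alternative'' at the end is in fact exactly the paper's argument. Both work; the convexity route is slightly more conceptual and dovetails with the paper's later quantitative Lemma~\ref{lem:line-approximation-error} (which reuses the same $\omega''\ge 1/2$ bound), while the discriminant route is more elementary and avoids any appeal to differentiability or convexity. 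One small stylistic point: you don't actually need strict convexity on all of $\R$; convexity on $[1/2,1]$ would suffice since you only select $p$ from that interval, though your stronger claim is also true and costs nothing.
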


	\begin{proof}
		Suppose that some value of $p$ satisfies $\ell(p) = \omega(p)$, where $\ell \in L$. Rearranging, we find that
		\begin{equation} \label{eqn:disc}
			p^2 + (1 - p)^2 = r(p)^2,
		\end{equation}
		where $r(p)$ is another affine function. The quadratic expression on the left hand side of Equation~\ref{eqn:disc} has a nonzero discriminant of $-4$. Therefore, Equation~\ref{eqn:disc} must not be an identity, and hence it has at most two solutions $p$. So each $\ell \in L$ intersects $\omega$ at most twice, and hence $L$ intersects $\omega$ in at most countably many places.
	\end{proof}

	For our quantitative negative result (Theorem~\ref{thm:lower-bound}), we need to lower bound the error of any approximation of $\omega$ by affine functions.
	
	\begin{lem} \label{lem:line-approximation-error}
		Pick $p \in [1/2, 1]$ uniformly at random. Then for any affine function $\ell: \R \to \R$ and any $\epsilon > 0$,
		\[
		\Pr\left[|\ell(p) - \omega(p)| \leq \epsilon\right] \leq O(\sqrt{\epsilon}).
		\]
	\end{lem}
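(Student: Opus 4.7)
The plan is to exploit a quantitative lower bound on the second derivative of $\omega$ on $[1/2, 1]$, and then translate that curvature into an upper bound on the Lebesgue measure of the set where $\omega$ is close to an affine function. Writing $\omega(p) = \frac{1}{2} + \frac{1}{2}\sqrt{2p^2 - 2p + 1}$, a direct calculation yields $\omega''(p) = \tfrac{1}{2}(2p^2 - 2p + 1)^{-3/2}$, which is at least $1/2$ on $[1/2, 1]$ since the bracket is at most $1$ there.

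Fix any affine function $\ell \colon \R \to \R$ and set $h := \omega - \ell$; then $h'' = \omega'' \geq 1/2$ on $[1/2, 1]$. I will analyze the set $S := \{p \in [1/2, 1] : |h(p)| \leq \epsilon\}$ using second divided differences. For any distinct $p_1 < p_2 < p_3$ in $[1/2, 1]$, the mean value theorem for divided differences gives $h[p_1, p_2, p_3] = \tfrac{1}{2} h''(\xi) \geq \tfrac{1}{4}$ for some $\xi \in (p_1, p_3)$, since the affine part $\ell$ has vanishing second divided difference. Meanwhile, the standard identity
\[
h[p_1, p_2, p_3] = \frac{h(p_1)}{(p_1 - p_2)(p_1 - p_3)} + \frac{h(p_2)}{(p_2 - p_1)(p_2 - p_3)} + \frac{h(p_3)}{(p_3 - p_1)(p_3 - p_2)},
\]
combined with $|h(p_i)| \leq \epsilon$ and direct simplification (the common denominator collapses), yields $|h[p_1, p_2, p_3]| \leq 2\epsilon / [(p_2 - p_1)(p_3 - p_2)]$. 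Comparing the two bounds shows that any three distinct points of $S$ must satisfy $(p_2 - p_1)(p_3 - p_2) \leq 8\epsilon$.

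To translate this into a bound on $|S|$, set $L := |S|$ and let $p_1 := \inf S$ and $p_3 := \sup S$ (both lie in $S$, which is closed by continuity of $h$). Each of the intervals $[p_1, p_1 + L/3]$ and $[p_3 - L/3, p_3]$ has length $L/3$ and therefore meets $S$ in measure at most $L/3$; hence the middle region $[p_1 + L/3, p_3 - L/3]$ must contain some $p_2 \in S$, as otherwise $|S|$ would be strictly less than $L$. Then $\min(p_2 - p_1, p_3 - p_2) \geq L/3$, so $L^2/9 \leq 8\epsilon$ and $L \leq O(\sqrt{\epsilon})$. Since $p$ is uniform on an interval of length $1/2$, this gives $\Pr[p \in S] \leq 2L = O(\sqrt{\epsilon})$.

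The main obstacle is guaranteeing three well-spaced points in $S$ even when $S$ is disconnected (which does occur whenever the minimum of $h$ dips below $-\epsilon$); the pigeonhole step above handles this cleanly, after which the rest of the argument reduces to the explicit computation of $\omega''$ and the classical divided-difference identity.
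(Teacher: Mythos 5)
Your proof is correct. The key identity you use — that for $p_1 < p_2 < p_3$ with $a = p_2 - p_1$ and $b = p_3 - p_2$, the sum $\tfrac{1}{a(a+b)} + \tfrac{1}{ab} + \tfrac{1}{(a+b)b} = \tfrac{2}{ab}$ — checks out, so the bound $|h[p_1,p_2,p_3]| \le 2\epsilon/[(p_2-p_1)(p_3-p_2)]$ holds, and the pigeonhole step correctly produces three points of $S$ with gaps $\ge L/3$.

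Your route differs from the paper's in a meaningful way. Both proofs hinge on the same curvature bound $\omega'' \ge 1/2$ on $[1/2,1]$ and both invoke Rolle-type machinery (you via the mean value theorem for second divided differences, the paper via the error formula for linear interpolation), but the paper first argues that one may assume WLOG that $\ell$ linearly interpolates $\omega$ at two (possibly coincident) points $x_1, x_2 \in I$ (using a translation/rotation argument to reduce to this case), derives $|\omega(x)-\ell(x)| \ge \tfrac14|x-x_1||x-x_2|$, and then union-bounds the probability that $p$ lands within $2\sqrt{\epsilon}$ of $x_1$ or $x_2$. You instead work directly with the sublevel set $S = \{p : |h(p)|\le\epsilon\}$ and extract three well-separated points by a pigeonhole on measure, bypassing the WLOG reduction entirely. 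Your version is arguably cleaner, since the paper's ``rotate $\ell$ up about the endpoint until tangent'' step is informal and your argument also handles disconnected $S$ transparently; the paper's version has the minor virtue of identifying the two ``bad'' intervals around $x_1$ and $x_2$ geometrically. Both yield the same $O(\sqrt{\epsilon})$ bound.
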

	
	\begin{proof}
		Let $I = [1/2, 1]$. We first compute
		\begin{equation} \label{eqn:second-deriv}
			\omega''(x) = \frac{1}{2}[x^2 + (1 - x)^2]^{-3/2} = \frac{1}{2}\left[2\left(x - \frac{1}{2}\right)^2 + \frac{1}{2}\right]^{-3/2} \geq \frac{1}{2} \quad \text{on $I$.}
		\end{equation}
		Hence $\omega$ is uniformly convex on $I$.
		
		Without loss of generality we can assume that the graph of $\ell$ intersects the graph of $\omega$ twice (with a point of tangency counted as a double intersection). After all, if $\ell < \omega$ on $I$, translate $\ell$ up until the first moment of equality with $\omega$, thus decreasing the pointwise error between $\ell$ and $\omega$ at every $x \in I$. If $\ell$ is then tangent to $\omega$, we are done. Otherwise, $\ell$ intersects $\omega$ at an endpoint, so rotate $\ell$ up about this point until it is tangent to $\omega$ (no other intersections occur because $\omega$ is uniformly convex). Pointwise errors do not increase under this rotation, so the entire transformation only increases the probability in the lemma statement. Similar considerations hold if initially $\ell > \omega$ or $\ell$ intersects $\omega$ at one point. Thus we may assume that $\ell$ linearly interpolates $\omega$.
		
		Suppose $\ell$ interpolates $\omega$ at the (potentially coincident) points $x_1, x_2 \in I$. We now claim that for all $x \in I$, there exists $\xi_x \in I$ such that
		\begin{equation} \label{eqn:mvt}
			\omega(x) - \ell(x) = \frac{\omega''(\xi_x)}{2} (x - x_1)(x - x_2).
		\end{equation}
		This follows from a standard argument in interpolation theory; we include the details here for completeness. If $x = x_1$ or $x = x_2$, Equation~\ref{eqn:mvt} is trivial, since both sides are zero. Otherwise, let $\phi_x(t) = \omega(t) - \ell(t) - (\omega(x) - \ell(x)) \cdot \frac{(t - x_1)(t - x_2)}{(x - x_1)(x - x_2)}$. Then $\phi_x$ is zero at $x$, $x_1$, and $x_2$. By Rolle's theorem, this implies that $\phi_x'$ has at least two zeroes in $I$ (actually Rolle's theorem only gives one zero if $x_1 = x_2$, but in this case $x_1 = x_2$ is another zero of $\phi_x'$, so either way $\phi_x'$ has two distinct zeroes in $I$). By another application of Rolle's theorem, there is some $\xi_x \in I$ such that $\phi_x''(\xi_x) = 0$. Equation~\ref{eqn:mvt} follows.
		
		By Equation~\ref{eqn:second-deriv},
		\[
			|\omega(x) - \ell(x)| \geq \frac{1}{4} |x - x_1| |x - x_2|.
		\]
		In particular, when $\min\{|x - x_1|, |x - x_2|\} > 2\sqrt{\epsilon}$, $|\omega(x) - \ell(x)| > \epsilon$. The probability that $p$ is within $2\sqrt{\epsilon}$ of either $x_1$ or $x_2$ is $O(\sqrt{\epsilon})$ by the union bound.
	\end{proof}
	
	\subsection{Proofs of main results}
	
	\begin{proof}[Proof of Theorem~$\ref{thm:main}$]
		Fix $\mathsf{Cor}: X \times Y \to A \times B$, where $X, Y$ are countable, $A, B$ are finite, and $\mathsf{Cor} \in \mathbf{Q}$. We will show that there is some choice of $p$ so that there is no reduction from $\mathsf{S}_{p, 1/2}$ to $\mathsf{Cor}$; since $\mathsf{S}_{p, 1/2}$ is a binary correlation box in $\mathbf{BELL}$, this will complete the proof.
		
		For each $k \in \N$, let $L_{\mathsf{Cor}, k}$ be the set of affine functions given by Lemma~\ref{lem:box-lines}. The alphabet bounds for $\mathsf{Cor}$ imply that $L_{\mathsf{Cor}, k}$ is countable. Let $L = \bigcup_{k \in \N} L_{\mathsf{Cor}, k}$, so that $L$ is still countable. By Lemma~\ref{lem:countable-lines-approximation}, choose $p \in [1/2, 1]$ so that for every $\ell \in L$, $\ell(p) \neq \omega(p)$. Then $\mathsf{S}_{p, 1/2}$ does not reduce to $\mathsf{Cor}$, because if there were a $k$-query ($0$-error) reduction for some $k$, Lemma~\ref{lem:box-lines} would imply that there was some $\ell \in L$ with $\ell(p) = \omega(p)$.
	\end{proof}

	\begin{proof}[Proof of Theorem~$\ref{thm:lower-bound}$]
		Fix $\mathsf{Cor}_2: X \times Y \to A \times B$, where $X, Y, A, B$ are finite and $\mathsf{Cor}_2 \in \mathbf{Q}$. Let $L_{\mathsf{Cor}_2, k}$ be the set of affine functions given by Lemma~\ref{lem:box-lines}. Pick $p \in [1/2, 1]$ uniformly at random. By Lemma~\ref{lem:line-approximation-error} and the union bound, for any $\epsilon_k > 0$, the probability that some $\ell \in L_{\mathsf{Cor}_2, k}$ satisfies $|\ell(p) - \omega(p)| \leq \epsilon_k$ is at most $O(\sqrt{\epsilon_k} \cdot |L_{\mathsf{Cor}_2, k}|)$. Therefore, by the union bound over $k$,
		\begin{equation} \label{eqn:prob-bound}
			\Pr[\exists k,  \exists \ell \in L_{\mathsf{Cor}_2, k}, |\ell(p) - \omega(p)| \leq \epsilon_k] \leq O\left(\sum_{k = 1}^{\infty} \sqrt{\epsilon_k} \cdot |L_{\mathsf{Cor}_2, k}|\right).
		\end{equation}
		Choose $\epsilon_k$ so that $\sqrt{\epsilon_k} \cdot |L_{\mathsf{Cor}_2, k}| = c/k^2$, where $c$ is a sufficiently small constant so that the bound in Equation~\ref{eqn:prob-bound} is strictly less than $1$. (Such a $c$ exists because $\sum_k 1/k^2$ is a convergent series.) This can be achieved while maintaining
		\[
			\epsilon_k \geq \Omega\left(\frac{1}{k^4 |L_{\mathsf{Cor}_2, k}|^2}\right),
		\]
		which implies by Lemma~\ref{lem:box-lines} that
		\[
			k^4 \cdot (2|X|)^{4|A|^k} \cdot (2|Y|)^{4|B|^k} \geq \Omega(1/\epsilon_k).
		\]
		
		By our choice of $\epsilon_k$, there exists some $p$ so that for every $k$, for every $\ell \in L_{\mathsf{Cor}_2, k}$, $|\ell(p) - \omega(p)| > \epsilon_k$. Choose $\mathsf{Cor}_1 = \mathsf{S}_{p, 1/2}$. By Lemma~\ref{lem:box-lines}, if there is a $k$-query $\epsilon$-error reduction from $\mathsf{Cor}_1$ to $\mathsf{Cor}_2$, then $\epsilon > \epsilon_k$.
	\end{proof}

	\section{There's always a harder box} \label{sec:chain}
	The proofs of our main results are done. In this section, we make a simple observation that follows easily from our negative results. Define a preorder on correlation boxes by saying that $\mathsf{Cor} \leq \mathsf{Cor}'$ if there is a reduction from $\mathsf{Cor}$ to $\mathsf{Cor}'$. Write $\mathsf{Cor} < \mathsf{Cor}'$ if $\mathsf{Cor} \leq \mathsf{Cor}'$ and $\mathsf{Cor}' \not \leq \mathsf{Cor}$.
	\begin{thm}
		For any finite-alphabet correlation box $\mathsf{Cor} \in \mathbf{BELL}$, there is another finite-alphabet correlation box $\mathsf{Cor}' \in \mathbf{BELL}$ such that $\mathsf{Cor} < \mathsf{Cor}'$.
	\end{thm}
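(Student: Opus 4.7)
The plan is to apply Theorem~\ref{thm:main} to $\mathsf{Cor}$ to produce a binary witness box in $\mathbf{BELL}$ that fails to reduce to $\mathsf{Cor}$, and then to assemble a single finite-alphabet $\mathbf{BELL}$ box that dominates both $\mathsf{Cor}$ and the witness. Because $\mathbf{BELL} \subseteq \mathbf{Q}$ and $\mathsf{Cor}$ has finite (hence countable) alphabets, Theorem~\ref{thm:main} supplies some binary $\mathsf{Cor}'' \in \mathbf{BELL}$ with $\mathsf{Cor}'' \not\leq \mathsf{Cor}$.

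The construction of $\mathsf{Cor}'$ is a ``tagged join'' of $\mathsf{Cor}$ and $\mathsf{Cor}''$. By Carath\'eodory applied to the finite-dimensional space of correlation boxes on the given alphabets, I can write finite convex decompositions $\mathsf{Cor} = \sum_{i \in I} p_i \mathsf{Cor}_i$ and $\mathsf{Cor}'' = \sum_{j \in J} q_j \mathsf{Cor}''_j$ in which each $\mathsf{Cor}_i$ and $\mathsf{Cor}''_j$ is of the pure form in the definition of $\mathbf{BELL}$, with associated unitaries $U^{(i)}_x, V^{(i)}_y$ and $U''^{(j)}_{x''}, V''^{(j)}_{y''}$ and post-processings $f^{(i)}, g^{(i)}, f''^{(j)}, g''^{(j)}$. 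For each pair $(i, j)$ I define a pure BELL box $\mathsf{Cor}'_{i,j}$ with input alphabets $X \sqcup X''$ and $Y \sqcup Y''$ and output alphabet $\{0,1\} \times I \times J$ on each side, whose unitaries are $U^{(i)}_x$ on tag-$1$ Alice inputs and $U''^{(j)}_{x''}$ on tag-$2$ Alice inputs (symmetrically for Bob), and whose post-processings are $s \mapsto (s, i, j)$ and $t \mapsto (t, i, j)$. Each post-processing has image of size two, so the atoms really are pure BELL boxes. Setting $\mathsf{Cor}' = \sum_{i, j} p_i q_j \mathsf{Cor}'_{i,j}$ yields a finite convex combination in $\mathbf{BELL}$ with finite alphabets. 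The design deliberately defers the real post-processing to the reduction step, circumventing the obstacle that a single fixed $f$ cannot simultaneously target the alphabets $A$ and $\{0,1\}$.

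Both $\mathsf{Cor} \leq \mathsf{Cor}'$ and $\mathsf{Cor}'' \leq \mathsf{Cor}'$ then follow from trivial $1$-query reductions: Alice and Bob prepend the appropriate tag to their inputs, receive $((s, i, j), (t, i, j))$ with matching labels, and then apply either $(f^{(i)}, g^{(i)})$ or $(f''^{(j)}, g''^{(j)})$ according to their input tag. The $j$-sum (respectively $i$-sum) collapses to give exactly $\mathsf{Cor}(x, y)$ (respectively $\mathsf{Cor}''(x'', y'')$). Finally, $\mathsf{Cor}' \not\leq \mathsf{Cor}$, because composing the $1$-query reduction from $\mathsf{Cor}''$ to $\mathsf{Cor}'$ with a hypothetical reduction from $\mathsf{Cor}'$ to $\mathsf{Cor}$ would yield $\mathsf{Cor}'' \leq \mathsf{Cor}$, contradicting the choice of $\mathsf{Cor}''$. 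The main obstacle throughout is verifying that $\mathsf{Cor}' \in \mathbf{BELL}$; once the label-broadcasting trick resolves this, everything else is routine transitivity of the reduction preorder.
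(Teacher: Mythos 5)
The proposal is correct and follows the same high-level plan as the paper: invoke Theorem~\ref{thm:main} to produce a binary $\mathbf{BELL}$ witness box that fails to reduce to $\mathsf{Cor}$, then build $\mathsf{Cor}'$ as a ``disjoint-union'' box that dominates both $\mathsf{Cor}$ and the witness, so that $\mathsf{Cor}\leq\mathsf{Cor}'$ trivially and $\mathsf{Cor}'\not\leq\mathsf{Cor}$ by transitivity. Where you diverge is in how you certify $\mathsf{Cor}'\in\mathbf{BELL}$. The paper simply has each player run the $\mathsf{Cor}$-protocol or the $\mathsf{Cor}_0$-protocol according to their input's tag; this tacitly requires the post-processing $f$ in a pure $\mathbf{BELL}$ box to depend on $x$. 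Under a literal reading of Definition~3, where $f,g:\{0,1\}\to A$ are fixed, the marginal of Alice's output in any pure $\mathbf{BELL}$ box (and hence any convex mixture) is independent of $x$, so a single $f$ cannot steer the output into $A$ for $x\in X$ and into $\{0,1\}$ for $x\in\{0,1\}$; your worry is real, and your fix---Carath\'eodory to get finite atomic decompositions, broadcast the atom index $(i,j)$ through the output alphabet, and defer the real post-processing to the reduction step---is a correct way to exhibit $\mathsf{Cor}'$ as a genuine finite convex combination of pure boxes with input-independent post-processing. That said, the literal reading of Definition~3 cannot be what the authors intend: it would already break $\mathbf{SR}\subseteq\mathbf{BELL}$ (the deterministic box $(x,y)\mapsto(x,0)$ has an $x$-dependent Alice marginal, which no fixed-$f$ mixture can produce), contradicting Figure~1. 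Reading $f,g$ as input-dependent functions restores $\mathbf{SR}\subseteq\mathbf{BELL}$ and validates the paper's shorter construction, so the extra Carath\'eodory and index-broadcasting machinery is sound but not strictly needed; its value is in documenting (and working around) an imprecision in the stated definition of $\mathbf{BELL}$.
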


	\begin{proof}
		By Theorem~\ref{thm:main}, there is a binary correlation box $\mathsf{Cor}_0 \in \mathbf{BELL}$ such that $\mathsf{Cor}_0 \not \leq \mathsf{Cor}$. Write $\mathsf{Cor}: X \times Y \to A \times B$. By relabeling if necessary, we can assume that $0, 1 \not \in X, Y$. Define
		\[
			\mathsf{Cor}': (X \cup \{0, 1\}) \times (Y \cup \{0, 1\}) \to (A \cup \{0, 1\}) \times (B \cup \{0, 1\})
		\]
		by the following $\mathbf{BELL}$ algorithm:
		\begin{itemize}
			\item If $x \in X$, then Alice does what she would have done in the protocol witnessing $\mathsf{Cor} \in \mathbf{BELL}$. Otherwise, if $x \in \{0, 1\}$, she does what she would have done in the protocol witnessing $\mathsf{Cor}_0 \in \mathbf{BELL}$.
			\item Bob acts similarly.
		\end{itemize}
		By construction:
		\begin{itemize}
			\item If $x \in X, y \in Y$, then $\mathsf{Cor}'(x, y) \sim \mathsf{Cor}(x, y)$. This immediately implies that $\mathsf{Cor} \leq \mathsf{Cor}'$.
			\item If $x, y \in \{0, 1\}$, then $\mathsf{Cor}'(x, y) \sim \mathsf{Cor}_0(x, y)$. This immediately implies that $\mathsf{Cor}_0 \leq \mathsf{Cor}'$, and hence by transitivity $\mathsf{Cor}' \not \leq \mathsf{Cor}$.
		\end{itemize}
		(Notice that if $x \in X, y \in \{0, 1\}$, the distribution $\mathsf{Cor}'(x, y)$ has no clear interpretation, but that doesn't matter for us. Similarly with the case $x \in \{0, 1\}, y \in Y$.)
	\end{proof}

	\section{Positive results} \label{sec:positive}

	We now show how to construct a finite-alphabet correlation box that is approximately complete for $\mathbf{BELL}$. The construction is simple, and just consists of an appropriate discretization of the Bloch sphere \cite{blo46}.

	\begin{thm}
		For every $\epsilon > 0$, there exists $\mathsf{Cor}_2: [T] \times [T] \to \{0, 1\} \times \{0, 1\}$ with $T \leq O(1/\epsilon^2)$ such that $\mathsf{Cor}_2 \in \mathbf{BELL}$, and for every $\mathsf{Cor}_1 \in \mathbf{BELL}$, there is a $1$-query $\epsilon$-error reduction from $\mathsf{Cor}_1$ to $\mathsf{Cor}_2$.
	\end{thm}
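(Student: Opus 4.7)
The plan is to define $\mathsf{Cor}_2$ by discretizing the Bloch sphere. Recall that a projective measurement on a single qubit, up to outcome relabeling, is specified by a unit vector $\vec{n} \in S^2$ (its measurement axis); accordingly, any ``pure'' $\mathbf{BELL}$ protocol for some $\mathsf{Cor}_1 : X \times Y \to A \times B$ is characterized, on the quantum layer, by axis maps $\vec{a} : X \to S^2$, $\vec{b} : Y \to S^2$, plus classical post-processing functions $f, g$. I would take $N \subseteq S^2$ to be a $\delta$-net in Euclidean distance with $|N| = T = O(1/\delta^2)$ and $\delta = \Theta(\epsilon)$, identify $N$ with $[T]$, and let $\mathsf{Cor}_2(i, j)$ be the joint distribution obtained by measuring the two halves of $\frac{1}{\sqrt{2}}(\ket{00} + \ket{11})$ along axes $\vec{n}_i$ and $\vec{n}_j$. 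Then $\mathsf{Cor}_2 \in \mathbf{BELL}$ immediately: take $U_i, V_j$ to be the single-qubit unitaries whose computational basis is the eigenbasis of $\vec{n}_i \cdot \vec{\sigma}$, and let $f, g$ be the identity.

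For the reduction from an arbitrary $\mathsf{Cor}_1 \in \mathbf{BELL}$, Alice and Bob use shared randomness to sample a pure component of $\mathsf{Cor}_1$, yielding axes $\vec{a}(x), \vec{b}(y)$ and post-processing functions $f, g$. Alice rounds $\vec{a}(x)$ to the nearest $\vec{n}_i \in N$; Bob rounds $\vec{b}(y)$ to the nearest $\vec{n}_j \in N$. They issue the single query $(i, j)$ to $\mathsf{Cor}_2$, obtain $(s, t)$, and output $(f(s), g(t))$.

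The key quantitative step is bounding the induced total variation error. Expanding the two-qubit outcome probabilities on $\ket{\phi^+}$ in the Pauli basis yields
\[
\Pr[\mathsf{Cor}_2(\vec{u}, \vec{v}) = (s, t)] = \tfrac{1}{4}\bigl(1 + (-1)^{s+t}\, \vec{u}^{\top} D \vec{v}\bigr)
\]
for a fixed diagonal sign matrix $D$ with $\|D\| = 1$. Since this expression is bilinear and the axes live on the unit sphere, one immediately gets $\mathrm{TV}\bigl(\mathsf{Cor}_2(\vec{a}, \vec{b}),\, \mathsf{Cor}_2(\vec{a}', \vec{b}')\bigr) \leq \tfrac{1}{2}\bigl(\|\vec{a} - \vec{a}'\| + \|\vec{b} - \vec{b}'\|\bigr) = O(\delta)$. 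Post-processing by $f, g$ cannot increase TV distance (data processing), and the convexity of TV distance propagates the bound through the outer average over shared randomness. Choosing $\delta = c\epsilon$ for a small enough absolute constant $c$ yields the claimed $T = O(1/\epsilon^2)$.

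The main obstacle is really just the Lipschitz estimate above, and that reduces to a one-line Pauli expansion; the $\delta$-net on $S^2$ (standard volumetric packing) and the bookkeeping for shared randomness and output relabeling are routine.
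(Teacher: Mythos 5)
Your proposal is correct and follows essentially the same approach as the paper: discretize the Bloch sphere with an $O(1/\epsilon^2)$-size net, define $\mathsf{Cor}_2$ by measuring a maximally entangled pair along net directions, and reduce by rounding each party's measurement axis to the nearest net point, with data processing and convexity handling post-processing and shared randomness. The only cosmetic difference is in deriving the $O(\delta)$ Lipschitz bound: the paper works with the singlet $\tfrac{1}{\sqrt{2}}(\ket{01}-\ket{10})$ and its rotational invariance $(V\otimes V)\ket{\phi}=\lambda\ket{\phi}$, whereas you expand $\ket{\phi^+}$ in the Pauli basis; both yield the same bilinear estimate.
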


	\begin{proof}
		Let $c_1, c_2, \dots, c_T \in \R^3$ be points on the unit sphere such that every point on the unit sphere is within $\epsilon$ of some $c_i$ in $\ell_2$ distance. Such a collection of points exists with $T \leq O(1/\epsilon^2)$. We define $\mathsf{Cor}_2$ by the following algorithm, simultaneously showing that $\mathsf{Cor}_2 \in \mathbf{BELL}$: Alice and Bob share a pair of qubits in the state $\ket{\phi} \stackrel{\text{def}}{=} \frac{\ket{01} - \ket{10}}{\sqrt{2}}$. (This can be obtained by applying local operations to $\frac{\ket{00} + \ket{11}}{\sqrt{2}}$.) On inputs $i, j$:
		\begin{enumerate}
			\item Alice finds a unitary matrix $U$ such that $U^{-1}\ket{1}$ is represented by the point $c_i$ on the Bloch sphere. She applies $U$ to her qubit, measures in the computational basis, and outputs the observed bit.
			\item Bob finds a unitary matrix $V$ such that $V^{-1}\ket{1}$ is represented by the point $c_j$ on the Bloch sphere. He applies $V$ to his qubit, measures in the computational basis, and outputs the observed bit.
		\end{enumerate}
		We now give the reduction. From the definition of $\mathbf{BELL}$, it suffices to show how to approximately simulate applying some unitary matrix $U \otimes V$ to $\ket{\phi}$ and then measuring in the computational basis, where Alice chooses $U$ and Bob chooses $V$. To do this, Alice finds $c_i$ that is closest to the Bloch sphere representation of $U^{-1} \ket{1}$ in $\ell_2$ distance, and Bob finds $c_j$ that is closest to the Bloch sphere representation of $V^{-1} \ket{1}$ in $\ell_2$ distance. They query $\mathsf{Cor}_2(i, j)$.
		
		We now prove correctness of this reduction. A curiosity of the state $\ket{\phi}$ is that for \emph{any} unitary $V$, there is a scalar $\lambda \in \C$ such that $(V \otimes V) \ket{\phi} = \lambda \ket{\phi}$. Proof:
		\begin{align*}
			\braket{ij|V \otimes V|\phi} = \frac{\braket{i|V|0} \braket{j|V|1} - \braket{i|V|1} \braket{j|V|0}}{\sqrt{2}},
		\end{align*}
		and hence $\braket{00|V \otimes V|\phi} = \braket{11|V \otimes V|\phi} = 0$ and $\braket{01|V \otimes V|\phi} = -\braket{10|V \otimes V|\phi}$.
		
		Therefore, we can write
		\begin{align*}
			(U \otimes V)\ket{\phi} &= (UV^{-1} \otimes I) (V \otimes V) \ket{\phi} \\
			&=\lambda (UV^{-1} \otimes I) \ket{\phi} \\
			&= \frac{\lambda}{\sqrt{2}}((UV^{-1} \ket{0}) \ket{1} - (UV^{-1} \ket{1}) \ket{0}).
		\end{align*}
		It follows that when $(U \otimes V) \ket{\phi}$ is measured, giving two bits $a, b$,
		\begin{align*}
			\Pr[a = b] &= \frac{1}{2} |\braket{1| UV^{-1} | 0}|^2 + \frac{1}{2} |\braket{0|UV^{-1}|1}|^2 \\
			&= 1 - |\braket{1 | UV^{-1} | 1}|^2.
		\end{align*}
		Let $x$ be the Bloch sphere representation of $U^{-1} \ket{1}$, and let $y$ be the Bloch sphere representation of $V^{-1} \ket{1}$. Then $|\braket{1 | UV^{-1} | 1}|^2 = \frac{1}{2} + \frac{1}{2} x \cdot y$, where $\cdot$ is the dot product. So $\Pr[a = b] = \frac{1}{2} - \frac{1}{2} x \cdot y$. Let $(\hat{a}, \hat{b}) = \mathsf{Cor}(i, j)$. Then
		\begin{align*}
			\left|\Pr[a = b] - \Pr[\hat{a} = \hat{b}]\right| &= \frac{1}{2} |x \cdot y - c_i \cdot c_j| \\
			&\leq \frac{1}{2} |x \cdot y - x \cdot c_j| + \frac{1}{2}|x \cdot c_j - c_i \cdot c_j| \\
			&\leq \epsilon/2 + \epsilon/2.
		\end{align*}
		Since $a, b, \hat{a}, \hat{b}$ all have uniform marginal distributions, it follows that $(a, b)$ and $(\hat{a}, \hat{b})$ are $\epsilon$-close in total variation distance.
	\end{proof}

	\begin{prop}
		There exists $\mathsf{Cor}_2: \N \times \N \to \{0, 1\} \times \{0, 1\}$ such that $\mathsf{Cor}_2 \in \mathbf{BELL}$, and for every $\mathsf{Cor}_1 \in \mathbf{BELL}$ and every $\epsilon > 0$, there is a $1$-query $\epsilon$-error reduction from $\mathsf{Cor}_1$ to $\mathsf{Cor}_2$.
	\end{prop}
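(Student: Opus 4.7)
The plan is to amalgamate the finite-alphabet $\mathbf{BELL}$-completeness boxes from the previous theorem into a single box with input alphabet $\N$, via a disjoint-union construction analogous to Section~\ref{sec:chain}. For each $n \in \N$, I would invoke the previous theorem with error parameter $1/n$ to obtain a box $\mathsf{Cor}^{(n)}: [T_n] \times [T_n] \to \{0,1\} \times \{0,1\}$ in $\mathbf{BELL}$ with $T_n \leq O(n^2)$, and fix any bijection $\phi: \N \to \bigsqcup_{n \in \N} \{n\} \times [T_n]$. Define $\mathsf{Cor}_2: \N \times \N \to \{0,1\} \times \{0,1\}$ so that on inputs $x, y$ with $\phi(x) = (n, i)$ and $\phi(y) = (m, j)$, we have $\mathsf{Cor}_2(x, y) = \mathsf{Cor}^{(n)}(i, j)$ whenever $n = m$; on mismatched inputs the output will be whatever the global protocol below happens to produce.

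To verify $\mathsf{Cor}_2 \in \mathbf{BELL}$, I would exhibit one global $\mathbf{BELL}$ protocol: Alice and Bob share a single pair of maximally entangled qubits; on input $x$ with $\phi(x) = (n, i)$, Alice applies the unitary that the $\mathbf{BELL}$ witness for $\mathsf{Cor}^{(n)}$ prescribes for input $i$ and measures in the computational basis, and Bob acts symmetrically on $y$. This is a legal $\mathbf{BELL}$ strategy because the definition permits countable input alphabets with an arbitrary single-qubit unitary assigned to each input. When the $n$-prefixes of $x$ and $y$ agree, the joint action is exactly the $\mathbf{BELL}$ witness for $\mathsf{Cor}^{(n)}$, so the output distribution coincides with $\mathsf{Cor}^{(n)}(i, j)$ as required; on mismatched inputs the joint action simply defines $\mathsf{Cor}_2$ there.

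For the reduction, given $\mathsf{Cor}_1 \in \mathbf{BELL}$ and $\epsilon > 0$, set $n = \lceil 1/\epsilon \rceil$ and let $\Pi^{(n)}$ be the $1$-query $\epsilon$-error reduction from $\mathsf{Cor}_1$ to $\mathsf{Cor}^{(n)}$ guaranteed by the previous theorem. Since $n$ depends only on $\epsilon$, which is a publicly known parameter of the reduction, Alice and Bob can both encode their local query $i$ (respectively $j$) to $\mathsf{Cor}^{(n)}$ as the query $\phi^{-1}(n, i)$ (respectively $\phi^{-1}(n, j)$) to $\mathsf{Cor}_2$; the matching $n$-prefixes guarantee that the response is distributed exactly as $\mathsf{Cor}^{(n)}(i, j)$, so running $\Pi^{(n)}$ under this translation gives a $1$-query $\epsilon$-error reduction from $\mathsf{Cor}_1$ to $\mathsf{Cor}_2$. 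The only nontrivial point is the claim in the second paragraph that countably many $\mathbf{BELL}$ strategies can be stitched into one, but this is really just an unpacking of the definition, and I do not anticipate a genuine obstacle.
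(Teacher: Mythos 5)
Your proof is correct, but it takes a genuinely different route from the paper's one-line sketch, which is simply: replace the finite $\epsilon$-net $\{c_1, \dots, c_T\}$ on the unit sphere with a countable dense subset $\{c_1, c_2, \dots\}$, keep the rest of the preceding theorem's construction and reduction verbatim, and note that for any target $\epsilon$ the nearest-point translation now achieves error at most $\epsilon$ because density gives you points within any desired distance. That approach directly refines the box: every input $i \in \N$ has an intrinsic meaning (a measurement direction), and the reduction and error analysis are literally inherited. Your approach instead amalgamates the countable family of finite boxes $\mathsf{Cor}^{(n)}$ via a pairing bijection $\phi: \N \to \bigsqcup_n \{n\} \times [T_n]$, in the spirit of the disjoint-union construction of Section~\ref{sec:chain}. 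This is more modular, but it is not quite ``black-box'': to conclude that the stitched box lies in $\mathbf{BELL}$ you must open up the preceding theorem's proof far enough to see that every $\mathsf{Cor}^{(n)}$ is witnessed by a \emph{base} $\mathbf{BELL}$ protocol (a single maximally entangled qubit pair, one unitary per input, no convex combination). You do implicitly observe this, and it is true of the paper's construction, so the argument goes through; but if the theorem had been proved via a randomized $\mathbf{BELL}$ protocol, or one with $n$-dependent resources, your stitching step would need additional care. The paper's dense-subset refinement sidesteps this entirely and also avoids the bookkeeping of the bijection and the ``don't-care'' behavior on mismatched $n$-prefixes. Both proofs are valid; the paper's is shorter and more self-contained, while yours illustrates a reusable amalgamation pattern.
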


	\begin{proof}[Proof sketch]
		Use a countable dense subset of the Bloch sphere.
	\end{proof}

	\section{Open problems} \label{sec:open-problems}

	\begin{itemize}
		\item We proved that there is no $\mathbf{BELL}$-complete correlation box with countable input alphabets and \emph{finite} output alphabets. Does there exist a $\mathbf{BELL}$-complete correlation box with countable alphabets? (Our proof breaks down because there are uncountably many deterministic reductions to a correlation box with countably infinite output alphabets.)
		\item Does there exist a \emph{minimal} $\mathbf{BELL}$-hard finite-alphabet correlation box $\mathsf{Cor}$? (By minimal, we mean that if $\mathsf{Cor}'$ is another $\mathbf{BELL}$-hard finite-alphabet correlation box, then $\mathsf{Cor}$ reduces to $\mathsf{Cor}'$.) 
		\item What is the right relationship between $|X|, |Y|, |A|, |B|, k, \epsilon$ in Theorem~\ref{thm:lower-bound}?
	\end{itemize}

	\section{Acknowledgments}
	We thank Scott Aaronson and Ronald de Wolf for helpful comments and encouragement. This material is based upon work supported by the National Science Foundation Graduate Research Fellowship under Grant No. DGE-1610403. Cole Graham gratefully acknowledges the support of the Fannie and John Hertz Foundation.

	\bibliographystyle{alpha}
	\bibliography{bell}
\end{document}